\newtheorem{theorem}{Theorem}
\newtheorem{proposition}[theorem]{Proposition}
\newtheorem{lemma}[theorem]{Lemma}
\newtheorem{corollary}[theorem]{Corollary}
\newcommand{\nto}{\ensuremath{\stackrel{n}{\longrightarrow}}}
\newcommand{\rgphom}{\textsc{RGPHom}}
\newcommand{\HOM}{\textsc{Hom}}
\newcommand{\decisionpb}[4]{
\begin{center}
        \noindent\framebox{\begin{minipage}{#4\textwidth}
                #1\\
                Instance: #2\\ 
                Question: #3
        \end{minipage}}
\end{center}
}
\title{Complexity of conjunctive regular path query homomorphisms\footnote{This research was financed by the French government IDEX-ISITE initiative 16-IDEX-0001 (CAP 20-25), the IFCAM project ``Applications of graph homomorphisms'' (MA/IFCAM/18/39), and by the ANR project GRALMECO (ANR-21-CE48-0004). An extended abstract of this paper was published in the proceedings of the CIE'19 conference~\cite{confversion}.}}
\author{Laurent Beaudou\footnote{Université Clermont Auvergne, CNRS, Clermont Auvergne INP, Mines Saint-\'Etienne, LIMOS, 63000 Clermont-Ferrand, France. E-mails: laurent.beaudou@uca.fr, florent.foucaud@uca.fr, lhouari.nourine@uca.fr}
\and Florent Foucaud\footnotemark[2]
\and Florent R. Madelaine\footnote{LACL, Université Paris-Est Créteil, Créteil, France. E-mail: florent.madelaine@u-pec.fr}
\and Lhouari Nourine\footnotemark[2]
\and Ga\'etan Richard\footnote{Normandie Univ, UNICAEN, ENSICAEN, CNRS, GREYC, Caen (France). E-mail: gaetan.richard@unicaen.fr}}
\begin{document}

\maketitle

\begin{abstract}
A graph database is a digraph whose arcs are labeled with symbols from a fixed alphabet. A regular graph pattern (RGP) is a digraph whose edges are labeled with regular expressions over the alphabet. RGPs model navigational queries for graph databases called \emph{conjunctive regular path queries} (CRPQs). A match of a CRPQ in the database is witnessed by a special navigational homomorphism of the corresponding RGP to the database. We study the complexity of deciding the existence of a homomorphism between two RGPs. Such homomorphisms model a strong type of containment between the two corresponding CRPQs. We show that this problem can be solved by an EXPTIME algorithm (while general query containmement in this context is EXPSPACE-complete). We also study the problem for restricted RGPs over a unary alphabet, that arise from some applications like XPath or SPARQL. For this case, homomorphism-based CRPQ containment is in NP. We prove that certain interesting cases are in fact polynomial-time solvable.
\end{abstract}

\section{Introduction}

Graphs are a fundamental way to store and organize data. Most prominently, graph database systems have been developed for three decades and are widely used~\cite{Y90}; recently, such systems have seen an increased interest both in academic research and in the industry~\cite{B13}. A graph database can be seen as a directed graph with arc-labels (possibly also vertex-labels). Various methods are used to retrieve data in such systems, see for example the recently developed graph query languages G-CORE~\cite{GCORE} and CYPHER~\cite{CYPHER} for graph databases, with one of the earliest ones being G~\cite{CMW87}. Classically, matching queries in graph databases can be modeled as graph homomorphisms~\cite{KV00}. In this setting, a query is itself a graph, and a match is modeled by a homomorphism of the query graph to the database graph, that is, a vertex-mapping that preserves the graph adjacencies and labels.
Graph databases can be very large, thus it is important to study the algorithmic complexity of such queries. In modern applications, classic homomorphisms are often not powerful enough to model realistic graph data queries, indeed their nature is inherently local. In recent years, \emph{navigational queries}, or \emph{path queries}, have been developed~\cite{B13}. Such queries are more powerful than classical queries, since they allow for non-local pattern matching, by means of arbitrary paths or walks instead of arcs. Such queries can also be modeled as a more general kind of homomorphism, that we call \emph{navigational homomorphisms}. The most studied type of navigational queries is the one of \emph{regular path queries}, that is based on regular expressions~\cite{B13,FLS98,MW95}. The study of the algorithmic complexity of such homomorphisms has been recently initiated in~\cite{BarceloVardi2017}. In this paper, we continue this study by focusing on a strong form of navigational query containment, modeled by navigational homomorphisms between two queries. We study the general complexity of deciding the existence of a regular path navigational homomorphism between two graph queries. We also study some more restricted cases arising from relevant applications.

\paragraph{Graph databases, classical queries and homomorphisms.} Let $\Sigma$ be a fixed countable alphabet. A \emph{graph database} $B$ over $\Sigma$ is an arc-labeled digraph. More formally, we have $B=(D_B,E_B)$ where $D_B$ is a finite digraph with vertex set $V(D_B)$ and arc set $A(D_B)$, and $E_B:A(D_B)\to\Sigma$ is an arc-label function.

One type of queries of graph databases can be expressed as homomorphisms, see for example~\cite{KV00}. In this setting, a \emph{query} $Q=(D_Q,E_Q)$ for a graph database $B=(D_B,E_B)$ over $\Sigma$ is an arc-labeled digraph over $\Sigma$. (As noted in the survey~\cite{F21}, in some applications, graph databases and queries may have vertex-labels, or even more complicated structures; however, such cases can in fact always be reduced to the setting of arc-labeled digraphs.)

We say that \emph{$B$ matches $Q$} if there exists a \emph{homomorphism} of $Q$ to $B$, that is, an arc- and label-preserving vertex-mapping. More formally, such a homomorphism is a mapping $f$ of $V(D_Q)$ to $V(D_B)$ with the property that for every arc $(x,y)$ in $D_Q$, there is an arc $(f(x),f(y))$ in $D_B$ with $E_Q(x,y)=E_B(f(x),f(y))$. If such a homomorphism exists, we note $Q\to B$. Every homomorphic image $f(Q)$ of $Q$ to $B$ is a match of the query $Q$ in $B$. We refer to the book~\cite{HNbook} for more details on the theory of graph homomorphisms.

Note that queries and graph databases are modeled by the same kinds of objects, thus we can also consider homomorphisms between queries. This is a way to model \emph{query containment}. We say that a query $Q_2$ is \emph{contained} in query $Q_1$ if for every graph database $B$, if $Q_2\to B$ then $Q_1\to B$. Note that if $Q_1\to Q_2$, then for any graph database $B$, to every homomorphism of $Q_2$ to $B$ corresponds a homomorphism of $Q_1$ to $B$ (by transitivity of homomorphism). Thus $Q_2$ is contained in $Q_1$, and homomorphism can be seen as a strong form of containment.

The most basic algorithmic problems related to queries of graph databases are the \emph{evaluation problem}, which consists in deciding whether a query $Q$ has a match in a database $B$, and the \emph{containment problem}, that is, to decide whether a query $Q_1$ is contained in a query $Q_2$. These two problems can be modeled by the following decision problem.

\decisionpb{\textsc{Hom}}{Two arc-labeled digraphs $G$ and $H$.}{Does $G$ admit a homomorphism to $H$?}{0.7}

\textsc{Hom} is generally NP-complete, even when $H$ is a small fixed graph (for example a symmetric triangle, in which case \textsc{Hom} is equivalent to the graph $3$-colourability problem). To understand better the complexity of \textsc{Hom}, the following version has been studied extensively, where $H$ is a fixed arc-labeled digraph, called the \emph{non-uniform} homomorphism problem. The graph $H$ is called the \emph{template}.

\decisionpb{\textsc{Hom($H$)}}{An arc-labeled digraph $G$.}{Does $G$ admit a homomorphism to $H$?}{0.7}

As shown in~\cite{FV98}, the set of \textsc{Hom($H$)} problems captures the whole class of \emph{constraint satisfaction problems} (CSPs), whose complexity has recently been classified into polynomial cases and NP-complete cases in~\cite{B17,Z17} independently.

\paragraph{Navigational queries, navigational homomorphisms and RGPs.} In a digraph $D$, a \emph{directed walk} is a sequence of arcs of the digraph, such that the head of each arc is the same vertex as the tail of the next arc. A \emph{directed path} is a directed walk where each vertex occurs in at most two arcs in the sequence.

Standard homomorphisms are not powerful enough to model many types of queries used in modern graph database systems. In particular, a homomorphism of a query $Q$ to a database $B$ can only match a subgraph of $B$ that is no larger than the query $Q$ itself. To the contrary, \emph{navigational queries} are types of queries where we may allow arbitrarily large subgraphs of the database to match the query. In this setting, we still model the query $Q$ (for database $B$) as an arc-labeled digraph, but the arcs are labeled with \emph{sets of words} over the alphabet $\Sigma$, rather than single letters. Now, a match of $Q$ in $B$ is a vertex-mapping $f$ from $V(D_Q)$ to $V(D_B)$ such that for an arc $(x,y)$ of $Q$ labeled with a set $E(x,y)$ of words, there exists a directed walk $W_{xy}$ in $D_B$ from $f(x)$ to $f(y)$ such that the concatenation of labels of the arcs of $W_{xy}$ is a word of $E(x,y)$.\footnote{In some applications, instead of the more general walks, trails or simple paths are considered, see for example~\cite{MP22}.}

Perhaps the most popular navigational queries are \emph{regular path queries} (RPQs), studied in many contexts~\cite{B13,BRV16,CMNP16journal,FFM08,KS08,MS04,BarceloVardi2017}. These navigational queries are based on regular expressions: the labels on query arcs are regular expressions over the alphabet $\Sigma$. The advantage of considering such queries is that regular languages are a relatively simple yet powerful way of defining sets of words, that is both well-understood and sufficiently expressive for many applications. The combination of regular path queries of this type over the same variables is called a \emph{conjunctive regular path query} (CRPQ)~\cite{B13} and the underlying graph structure representing it is a \emph{regular graph pattern} (RGP)~\cite{BarceloVardi2017}. We refer to the recent survey~\cite{F21} for more details on this model.

For a fixed countable alphabet $\Sigma$, we denote by $RegExp(\Sigma)$ the set of regular expressions over alphabet $\Sigma$, with the symbols $+$ (union), $\ast$ (Kleene star), and $\cdot$ (concatenation; sometimes this symbol is omitted). Moreover, for a regular expression $X$, as a notation we let $X^+:=X\cdot X^*$. For any regular expression $X$ in $RegExp(\Sigma)$, we denote by $L(X)$ the regular language defined by $X$.

A RGP $P$ over an alphabet $\Sigma$ is a pair $(D_P,E_P)$, where $D_P$ is a digraph with vertex set $V(D_P)$ and arc set $A(D_P)$ and $E_P:A(D_P)\to RegExp(\Sigma)$ is an arc-label function.

Given a directed walk $W=a_{1,2}\ldots a_{k-1,k}$ in a RGP $P$, the label $E_P(W)$ of $W$ is the regular expression over $\Sigma$ formed by the concatenation $E_P(a_{1,2})\ldots E_P(a_{k-1,k})$.

We now define homomorphisms of RGPs. Given two RGPs $P$ and $Q$ over alphabet $\Sigma$, a \emph{navigational homomorphism} (\emph{n-homomorphism} for short) of $P$ to $Q$ is a mapping $f$ of $V(D_P)$ to $V(D_Q)$ such that for each arc $(x,y)$ in $D_P$, there is a directed walk $W$ in $Q$ from $f(x)$ to $f(y)$ such that the language $L(E_Q(W))$ is contained in the language $L(E_P(x,y))$. When such an n-homomorphism exists, we write $P\nto Q$.

This type of homomorphism was studied in~\cite{BRV16,BarceloVardi2017}, see also~\cite{R18}. This notion has also been called \emph{embedding}~\cite{FGKMNT20}, a term that originated in the context of interconnection networks~\cite{DPS02}.

Note that this definition also applies to graph databases since, mathematically speaking, a graph database is a RGP whose labels are all regular expressions consisting of a unique symbol of $\Sigma$. We may thus define the associated decision problem, that correponds to the task of CRPQ \emph{evaluation}.

\decisionpb{\textsc{\rgphom}}{Two RGPs $P$ and $Q$.}{Does $P$ admit an n-homomorphism to $Q$?}{0.7}

Again, we may also study the non-uniform version of \textsc{\rgphom}, defined as follows for a fixed RGP $Q$. It was introduced in~\cite{BarceloVardi2017} for the restricted case where $Q$ is a graph database.

\decisionpb{\textsc{\rgphom($Q$)}}{A RGP $P$.}{Does $P$ admit an n-homomorphism to $Q$?}{0.7}

In~\cite{BarceloVardi2017}, the authors study \textsc{\rgphom($Q$)} for the specific case where $Q$ is a fixed graph database. They make a connexion to the classic homomorphism problem \textsc{Hom($H$)} and show that \textsc{\rgphom($Q$)} admits a complexity dichotomy: for a specific $Q$, \textsc{\rgphom($Q$)} is either polynomial-time or NP-complete. 
Indeed, they showed it to be equivalent to the (classical) homomorphism problems a.k.a. the \emph{Constraint Satisfaction Problems}~\cite{FV98}, whose complexity delineation follows a dichotomy based on specific algebraic properties of the template $Q$, as shown independently by Bulatov~\cite{B17} and Zhuk~\cite{Z17}.

In this paper, we initiate the study of \textsc{\rgphom($Q$)} in full generality, that is when $Q$ is not just a graph database, but any RGP. 
As we will see, for this case we cannot expect a polynomial-time/NP-complete dichotomy in the style of the result of~\cite{BarceloVardi2017}, since \textsc{\rgphom($Q$)} is in fact PSPACE-hard already for very simple cases, as it can model the problem of deciding the inclusion between regular languages. Thus, it appears that \textsc{\rgphom($Q$)} where $Q$ is a general RGP merits more investigation.

Also note that the complexity of general CRPQ containment (that is, not necessarily witnessed by a n-homomorphism) has been studied and is known to be EXPSPACE-complete~\cite{CDLV00,FLS98} (some special cases enjoy a lower complexity, see~\cite{FGKMNT20}, and some extensions are even more intractable~\cite{RRV17}). The homomorphism-based version of CRPQ containment does not capture CRPQ containment in its full generality (see Figure~\ref{fig:ex-query-containment} for an example). However, as we will see, the advantage is that \textsc{\rgphom($Q$)} is computationally easier.

\paragraph{Our results and structure of the paper.} We show in Section~\ref{sec:hardness} that \textsc{\rgphom} and related problems are PSPACE-hard (this holds even for \textsc{\rgphom($Q$)} for a simple template $Q$), by a reduction from \textsc{Regular Language Inclusion}.

We then show in Section~\ref{sec:decid} that \textsc{\rgphom} is decidable by an EXPTIME algorithm. This shows that checking homomorphism-based CRPQ containment is computationally more efficient than checking general CRPQ containment, which is known to be EXPSPACE-complete~\cite{CDLV00,FLS98}.

Finally, in Section~\ref{sec:a-a+}, we address the simpler case of a unary alphabet $\Sigma=\{a\}$, and where all arc labels are either ``$a$'' or ``$a^+$''. This includes not only all classic homomorphism problems and CSPs, but also some kinds of queries over hierarchical data such as SPARQL and XPath, studied for example in~\cite{CMNP16journal,FFM08,KS08,MS04} (it is known that most queries used in practice are very simple~\cite{BMT20}). For this type of RGPs, \textsc{\rgphom} is in NP. We give a polynomial/NP-complete complexity dichotomy for \textsc{\rgphom($Q$)} in the case of undirected (or symmetric) RGPs $Q$ of this class, in the style of Hell and Ne\v{s}et\v{r}il's dichotomy for \HOM{$H$}~\cite{HN90}. Furthermore, we show that even for arbitrary (directed) RGPs $Q$, \textsc{\rgphom($Q$)} follows a dichotomy, by relating it to (classical) homomorphism problems. We then relate the case of path templates that have only ``$a$'' labels to an interesting (polynomial-time solvable) parallel scheduling problem. Finally, we show that for all directed path RGP templates $Q$ with arc labels ``$a$'' or ``$a^+$'', \textsc{\rgphom($Q$)} is polynomial-time solvable. This special case is motivated by previous studies modelling XPath and SPARQL queries.

We start with some preliminary considerations in Section~\ref{sec:prelim} and we conclude in Section~\ref{sec:conclu}.

\section{Preliminaries}\label{sec:prelim}

We now give some definitions and useful results from the literature.

\subsection{Regular languages}

Given a fixed alphabet $\Sigma$, a \emph{language} is a set of words over $\Sigma$. A \emph{regular expression} over $\Sigma$ is defined recursively as follows. A symbol of $\Sigma$ is a regular expression. Given two regular expressions $E_1$ and $E_2$, $(E_1)+(E_2)$, $(E_1)\cdot (E_2)$ and $(E_1)^\ast$ are regular expressions. For a regular expression $E$, we let $L(E)$ be the language associated to $E$ in the classic way, where ``$+$'' denotes the union, ``$\cdot$'' denotes the concatenation, and ``$^\ast$'' is the Kleene star. A language \emph{regular} if it is the language associated to some regular expression. It is well-known that a language is regulat if and only if it can be recognized by a nondeterministic finite automaton (NFA). We refer to~\cite{bookREG} for further details on these matters.

We will use the following decision problems for regular languages.

  \decisionpb{\textsc{Regular Language Inclusion}}{Two regular expressions $E_1$ and $E_2$ (over the same alphabet).}{Is $L(E_1)\subseteq L(E_2)$?}{0.7}

\decisionpb{\textsc{Regular Language Universality}}{A regular expression $E$ over alphabet $\Sigma$.}{Is $L(E)=\Sigma^\ast$?}{0.7}

Note that \textsc{Regular Language Universality} is the special case of \textsc{Regular Language Inclusion} where $E_1=\Sigma^\ast$ and $E_2=E$. The following are classic results.
  
  \begin{theorem}[\cite{AHU74,MS73}]\label{thm:language-inclusion-PSPACE-c}
 \textsc{Regular Language Universality} and \textsc{Regular Language Inclusion} are PSPACE-complete.
  \end{theorem}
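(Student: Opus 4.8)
The plan is to establish the two halves of PSPACE-completeness—membership in PSPACE and PSPACE-hardness—while exploiting the observation already noted in the text that \textsc{Regular Language Universality} is exactly the special case of \textsc{Regular Language Inclusion} with $E_1$ a regular expression for $\Sigma^\ast$. Thus it suffices to prove membership for the more general problem (Inclusion) and hardness for the more special one (Universality): membership for Universality then follows as a special case, and hardness for Inclusion follows because the map $E\mapsto(\Sigma^\ast,E)$ is a trivial polynomial-time reduction from Universality to Inclusion.

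For membership, I would first convert each input regular expression $E_i$ into a nondeterministic finite automaton $A_i$ in polynomial time and space (e.g.\ by Thompson's construction, yielding $O(|E_i|)$ states). Now $L(E_1)\not\subseteq L(E_2)$ exactly when some word $w$ lies in $L(E_1)\setminus L(E_2)$. I would search for such a witness nondeterministically, reading $w$ one symbol at a time while maintaining (i) a single current state of $A_1$, guessing its run, and (ii) the set of states currently reachable in $A_2$, updated by the on-the-fly subset construction. The search accepts as soon as it reaches a configuration in which the $A_1$-component is accepting while the $A_2$-subset contains no accepting state. Each configuration stores one state of $A_1$ and one subset of states of $A_2$, i.e.\ polynomial space, and the witness word itself is never stored. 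Hence non-inclusion lies in NPSPACE; by Savitch's theorem ($\mathrm{NPSPACE}=\mathrm{PSPACE}$) and closure of PSPACE under complementation, \textsc{Regular Language Inclusion} is in PSPACE.

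For hardness I would give a generic reduction from membership in an arbitrary language $L\in\mathrm{PSPACE}$, decided by a deterministic Turing machine $M$ running in space $p(n)$. Given an input $x$ of length $n$, I encode a computation of $M$ as a string over a suitable alphabet $\Sigma$: a sequence of configurations, each padded to a fixed width $p(n)+O(1)$ and separated by a delimiter. The core of the reduction is to build, in size polynomial in $n$, a regular expression $E$ whose language is precisely the set of strings that \emph{fail} to encode an accepting computation of $M$ on $x$. This language is a union of ``local error'' expressions: strings with malformed formatting, strings not beginning with the correct initial configuration, strings containing no accepting configuration, and—most importantly—strings in which some single step violates the transition function. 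Because $M$ is deterministic and a step alters the tape only within a constant-size window around the head, an illegal step is witnessed locally: one guesses a position and checks that the constant-size window of cells around it, read across two consecutive configurations, is inconsistent with $M$'s rules. With this construction, $L(E)=\Sigma^\ast$ holds iff no string encodes an accepting computation, i.e.\ iff $x\notin L$. Thus $x\mapsto E$ reduces $\overline{L}$ to Universality; since PSPACE is closed under complementation, Universality, and hence Inclusion, is PSPACE-hard.

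The main obstacle is precisely this last error case in the hardness reduction: expressing ``some step is illegal'' by a polynomial-size regular expression. The delicate point is alignment—to compare a cell with its image one configuration later, the expression must skip exactly $p(n)+O(1)$ symbols, which I would implement by concatenating that many copies of a one-symbol wildcard subexpression (the union of all alphabet symbols) and then inspecting a constant-size window in each of the two configurations. Writing out all the bad windows (those matching no rule of $M$), combining them with the formatting and boundary errors, and verifying that the resulting union is exactly the complement of the set of accepting computations, is the bookkeeping-heavy part of the argument; everything else is routine.
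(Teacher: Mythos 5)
Your proposal is correct, but note that the paper does not prove this statement at all: it is quoted as a classical result and attributed to the cited references (Aho--Hopcroft--Ullman and Stockmeyer--Meyer). What you have written is essentially a faithful reconstruction of the standard argument from those sources --- membership via an on-the-fly subset construction for non-inclusion plus $\mathrm{NPSPACE}=\mathrm{PSPACE}$ and closure under complement, and hardness via a generic reduction encoding the invalid computations of a polynomial-space machine as a polynomial-size regular expression, with the alignment across consecutive configurations handled by $p(n)+O(1)$ wildcard factors. The only details you gloss over are routine (assuming the space-bounded machine halts on all inputs, and handling $\varepsilon$-transitions in the Thompson automata), so there is nothing to compare against and nothing substantive to object to.
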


\subsection{Cores and n-cores}

A digraph $D$ is called a \emph{core} if it does not admit a homomorphism to a proper sub-digraph of itself; in other words, every endomorphism is an automorphism. Thus, we define similarly the notion of a \emph{navigational core}, n-core for short: a RGP $P$ is an n-core if it does not admit a n-homomorphism to a proper sub-RGP of itself. Cores and n-cores are useful because of the following fact (this is classic in the case of graphs and their cores, see the book~\cite{HNbook}).

\begin{proposition}
  Let $P$ be an RGP that is not an n-core and $C$, a sub-RGP of $P$ such that $P\nto C$. Then, for any RGP $Q$, we have $P\nto Q$ if and only if $C\nto Q$.
\end{proposition}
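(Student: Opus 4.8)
The plan is to mimic the classical proof that homomorphism problems factor through cores, but carried out in the navigational setting where "edges" are witnessed by directed walks with language-inclusion constraints rather than single arcs. The key tool is the transitivity of the n-homomorphism relation $\nto$, which I would first establish (or at least invoke) as a lemma.

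\medskip

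\noindent\textbf{Transitivity of $\nto$.} First I would verify that if $P\nto R$ and $R\nto S$, then $P\nto S$. Let $f\colon V(D_P)\to V(D_R)$ and $g\colon V(D_R)\to V(D_S)$ be the two n-homomorphisms; I claim $g\circ f$ is an n-homomorphism of $P$ to $S$. Take any arc $(x,y)$ of $D_P$. Since $f$ is an n-homomorphism, there is a directed walk $W=a_{1,2}\ldots a_{k-1,k}$ in $R$ from $f(x)$ to $f(y)$ with $L(E_R(W))\subseteq L(E_P(x,y))$. Now each arc $a_{i,i+1}=(u_i,u_{i+1})$ of $W$ is an arc of $R$, so by $g$ there is a directed walk $W_i$ in $S$ from $g(u_i)$ to $g(u_{i+1})$ with $L(E_S(W_i))\subseteq L(E_R(a_{i,i+1}))$. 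Concatenating $W_1 W_2\cdots W_{k-1}$ yields a directed walk $W'$ in $S$ from $g(f(x))$ to $g(f(y))$. The main verification is that $L(E_S(W'))\subseteq L(E_P(x,y))$: since language concatenation is monotone under inclusion, $L(E_S(W'))=L(E_S(W_1)\cdots E_S(W_{k-1}))\subseteq L(E_R(a_{1,2})\cdots E_R(a_{k-1,k}))=L(E_R(W))\subseteq L(E_P(x,y))$. This gives $P\nto S$.

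\medskip

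\noindent\textbf{The two directions.} With transitivity in hand the proposition is immediate. Since $C$ is a sub-RGP of $P$, the identity inclusion map is an n-homomorphism $C\nto P$ (each arc of $C$ is witnessed by itself as a length-one walk in $P$). For the forward direction, suppose $P\nto Q$. Composing the hypothesized $C\nto P$ with $P\nto Q$ gives $C\nto Q$ by transitivity. For the converse, suppose $C\nto Q$. Composing the hypothesized $P\nto C$ with $C\nto Q$ gives $P\nto Q$. Both directions thus follow purely from transitivity together with the inclusion $C\nto P$.

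\medskip

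\noindent The only genuine content is the transitivity lemma, and within it the only step requiring care is the concatenation argument: I must check that decomposing the witnessing walk $W$ in $R$ arc-by-arc, replacing each arc by its witnessing walk in $S$, and reconcatenating produces a single legitimate directed walk whose label language is still contained in $L(E_P(x,y))$. The endpoints match up correctly because consecutive witnessing walks $W_i$ and $W_{i+1}$ share the vertex $g(u_{i+1})$, and the language inclusion is preserved because concatenation of languages respects set inclusion in each coordinate. I expect this concatenation-and-monotonicity step to be the main (and only) obstacle; everything else is a direct appeal to transitivity and the trivial inclusion homomorphism.
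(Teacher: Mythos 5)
Your proof is correct and follows essentially the same route as the paper: establish that $C$ and $P$ are n-homomorphically equivalent (via the identity inclusion $C\nto P$ and the hypothesis $P\nto C$) and conclude by transitivity of $\nto$. The only difference is that you spell out the transitivity lemma (arc-by-arc replacement of the witnessing walk and monotonicity of language concatenation under inclusion), which the paper simply invokes without proof; your verification of that step is sound.
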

\begin{proof}
It is clear that $P$ and $C$ are n-homomorphically equivalent ($P\nto C$ by assumption and $C\nto P$ by the identity map, since $C$ is a sub-RGP of $P$). Thus, by transitivity of n-homomorphisms, any n-homomorphism of one of them to $Q$ directly translates into an n-homomorphism of the other to $Q$.
\end{proof}

Thus, when studying the problem \textsc{\rgphom($Q$)}, we may always assume that $Q$ is an n-core, since \textsc{\rgphom($Q$)} has the same complexity as \textsc{\rgphom($C_Q$)}, where $C_Q$ is a sub-RGP of $Q$ that is an n-core. Unfortunately, it is coNP-complete to decide whether a graph is a core~\cite{HN92} (thus deciding whether a RGP is an n-core is coNP-hard, even if it is a graph database).

Note that, with respect to classic digraph homomorphisms, any digraph $G$ has (up to isomorphism) a unique minimal subgraph to which it admits a homomorphism, called \emph{the} core of $G$. This is not the case for RGPs and n-homomorphisms. For example, any two RGPs each consisting of a unique directed cycle with all arc labels equal to ``$a^+$'' have an n-homomorphism to each other. Thus, if we consider two such cycles of different lengths and glue them at one vertex, we obtain a RGP $P$ with two minimal sub-RGPs of $P$ (the two cycles) to which $P$ has an n-homomorphism, and these two are not isomorphic. A similar phenomenon has been observed in the case of a special subclass of directed tree RGPs, where being an n-core has been called \emph{non-redundancy}~\cite{CMNP16journal}.

\subsection{Graph homomorphisms dichotomy}

The following classic dichotomy result for \textsc{\HOM($H$)} problems will be useful.

\begin{theorem}[Hell and Ne\v{s}et\v{r}il \cite{HN90}]\label{thm:Hom(H)-dicho}
For any undirected graph $H$, \textsc{\HOM($H$)} is polynomial if $H$ is bipartite or contains a loop, and NP-complete otherwise.
\end{theorem}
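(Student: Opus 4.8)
The plan is to handle the two directions separately, since the polynomial cases are routine while the NP-hardness is the real content. First I would note that \textsc{\HOM($H$)} lies in NP for every fixed $H$: a candidate vertex-mapping $f\colon V(G)\to V(H)$ can be guessed and checked in polynomial time. For the polynomial cases, if $H$ has a loop at a vertex $v$, then every input graph maps to $H$ by sending all vertices to $v$, so the answer is always ``yes''. If $H$ is bipartite with at least one edge, its core is $K_2$, so $G\to H$ if and only if $G\to K_2$, i.e.\ if and only if $G$ is bipartite, which is testable in linear time; the edgeless case is immediate. This disposes of the polynomial half.

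For the hard direction, I would show that \textsc{\HOM($H$)} is NP-complete whenever $H$ is loopless and non-bipartite. A first reduction lets me assume $H$ is a core: replacing $H$ by its core changes neither the problem (up to polynomial equivalence) nor the properties of being loopless and non-bipartite. I would then argue by induction on $|V(H)|$. The base cases are the complete graphs $K_k$ with $k\geq 3$, where \textsc{\HOM($K_k$)} is exactly $k$-colourability and hence NP-complete, together with the odd cycles $C_{2k+1}$, which can be treated directly.

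The engine driving the induction is the pair of gadget reductions known as the \emph{indicator} and \emph{sub-indicator} constructions. Given an \emph{indicator} --- a graph $I$ with two distinguished vertices --- one builds from $H$ a derived graph $H^{*}$ on the same vertex set, declaring $u$ adjacent to $v$ in $H^{*}$ exactly when $I$ admits a homomorphism to $H$ carrying its two distinguished vertices to $u$ and $v$. Substituting a copy of $I$ for every edge of an input instance then yields a polynomial-time reduction from \textsc{\HOM($H^{*}$)} to \textsc{\HOM($H$)}; the sub-indicator construction similarly reduces the homomorphism problem of a suitable induced subgraph of $H$ to \textsc{\HOM($H$)}. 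The strategy is to choose indicators so that the derived target is a strictly smaller non-bipartite core, or one of the base cases, whence the induction hypothesis supplies NP-hardness that propagates back to \textsc{\HOM($H$)}.

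The hard part will be the combinatorial core of this induction: proving that every non-bipartite core $H$ that is not already a base case admits an indicator producing a strictly smaller non-bipartite target to which the reduction applies. This demands a delicate structural study of non-bipartite cores --- controlling the odd girth, guaranteeing suitable walks between prescribed vertices, and ensuring that the constructions never collapse the target to a bipartite graph or create a loop. Carrying out this analysis uniformly across all loopless non-bipartite $H$ is exactly what makes Theorem~\ref{thm:Hom(H)-dicho} difficult, and it is the step I expect to absorb the bulk of the work.
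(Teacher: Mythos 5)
This statement is not proved in the paper at all: it is the Hell--Ne\v{s}et\v{r}il dichotomy, imported verbatim from \cite{HN90} as a known tool, so there is no in-paper argument to compare yours against. Judged on its own terms, your outline faithfully reproduces the architecture of the original proof: membership in NP by guess-and-check, the trivial ``yes''-instance when $H$ has a loop, reduction of the bipartite case to $2$-colourability via the core $K_2$, reduction to the case where $H$ is a core, and the indicator/sub-indicator constructions driving an induction on the size of a loopless non-bipartite core down to complete graphs and odd cycles.

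The genuine gap is the one you yourself flag: the claim that every loopless non-bipartite core outside the base cases admits an indicator (or sub-indicator) producing a strictly smaller non-bipartite core target. That claim is the entire content of the theorem --- in \cite{HN90} it occupies a long and delicate case analysis organized around the odd girth of $H$, the structure of shortest odd cycles, and several distinct gadget families, and no shortcut is known at this level of generality (the later algebraic proofs via the CSP dichotomy are not simpler). A proof that defers exactly this step has established the easy half and the reduction framework, but not the dichotomy. As submitted, your text is a correct and well-organized roadmap of the known proof rather than a proof; to stand alone it would need either the full combinatorial analysis of \cite{HN90} or an explicit appeal to that reference, which is precisely what the paper does.
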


\section{PSPACE-hardness of \textsc{\rgphom} and related problems}\label{sec:hardness}

In this section, we show that \textsc{\rgphom} and related problems are PSPACE-hard.

\subsection{The general \textsc{\rgphom} problem}

We first present a very simple reduction from \textsc{Regular Language Inclusion} to \textsc{\rgphom}.

\begin{proposition}\label{prop:uniform-N-HOM-PSPACE-h}
  \textsc{\rgphom} is PSPACE-hard.
\end{proposition}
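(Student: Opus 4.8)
The plan is to reduce from \textsc{Regular Language Inclusion}, which is PSPACE-complete by Theorem~\ref{thm:language-inclusion-PSPACE-c}. Given an instance of that problem, namely two regular expressions $E_1$ and $E_2$ over a common alphabet $\Sigma$, I want to build (in polynomial time) two RGPs $P$ and $Q$ such that $P\nto Q$ if and only if $L(E_1)\subseteq L(E_2)$. The key observation driving the construction is that the definition of n-homomorphism involves a \emph{language containment} condition on arc labels: for an arc $(x,y)$ of $P$ mapped to a walk $W$ in $Q$, we require $L(E_Q(W))\subseteq L(E_P(x,y))$. This containment is precisely the kind of relation \textsc{Regular Language Inclusion} asks about, so the reduction should essentially encode the two expressions as labels on a single arc each.

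Concretely, I would take $P$ to be a single arc from a vertex $x_1$ to a vertex $x_2$ labeled with the expression $E_1$, and take $Q$ to be a single arc from a vertex $y_1$ to a vertex $y_2$ labeled with $E_2$. Both RGPs have two vertices and one arc, so the construction is clearly polynomial (in fact linear) in the size of the input expressions. Now an n-homomorphism $f$ of $P$ to $Q$ must send the arc $(x_1,x_2)$ to some directed walk $W$ in $Q$ with $L(E_Q(W))\subseteq L(E_1)$. Since $Q$ has only the single arc $(y_1,y_2)$ labeled $E_2$ and no other structure, the only nonempty walks available reuse this single arc; the simplest and intended choice is the walk $W$ consisting of that one arc, for which $E_Q(W)=E_2$. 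Thus the existence of the n-homomorphism reduces to the existence of a walk $W$ in $Q$ with $L(E_Q(W))\subseteq L(E_1)$.

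The main thing to verify carefully, and the only real obstacle, is the correctness of the equivalence: I must argue that $P\nto Q$ holds exactly when $L(E_2)\subseteq L(E_1)$, and then identify this with the input inclusion question (choosing the roles of $E_1,E_2$ so that the direction matches). The forward direction is immediate once we take $W$ to be the single arc of $Q$. The backward direction requires showing that \emph{no other} walk $W$ in $Q$ can satisfy the containment unless the single-arc inclusion already holds, or at least that the \emph{existence} of some satisfying walk is equivalent to the desired inclusion. Here I must be slightly careful: if $L(E_2)$ contains the empty word, longer walks repeating the arc could in principle yield labels whose language is still contained in $L(E_1)$, so I may prefer to design $Q$ so that its only directed walk from $y_1$ to $y_2$ is the single arc (which is automatic: the digraph underlying $Q$ is a single arc with distinct endpoints, so there is a unique nonempty walk from $y_1$ to $y_2$, and no other vertex to route through). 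This uniqueness of the walk is what makes the equivalence clean.

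Having fixed $W$ to be the unique arc-walk, the equivalence becomes $P\nto Q \iff L(E_2)\subseteq L(E_1)$. Since \textsc{Regular Language Inclusion} is PSPACE-complete and the reduction is polynomial-time computable, it follows that \textsc{\rgphom} is PSPACE-hard, as desired. I would remark that this already shows hardness for a very restricted instance shape (two-vertex, single-arc RGPs), foreshadowing the stronger claim that even \textsc{\rgphom($Q$)} for fixed simple $Q$ is PSPACE-hard, which presumably comes next in the paper.
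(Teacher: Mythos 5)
Your proposal is correct and is essentially identical to the paper's proof: the paper also reduces from \textsc{Regular Language Inclusion} by building two single-arc RGPs labelled $E_1$ and $E_2$ and observing that $L(E_1)\subseteq L(E_2)$ holds if and only if the $E_2$-labelled RGP has an n-homomorphism to the $E_1$-labelled one. Your extra care about the uniqueness of the walk in the target and the direction of the inclusion is a welcome (if minor) addition to what the paper states in one line.
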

\begin{proof}
We reduce from \textsc{Regular Language Inclusion}, which is PSPACE-complete (Theorem~\ref{thm:language-inclusion-PSPACE-c}). Given an input $E_1$, $E_2$ of \textsc{Regular Language Inclusion}, we construct two RGPs $P_1$ and $P_2$ on two vertices each, where $P_1$ contains a single arc labelled $E_1$ and $P_2$, a single arc labelled $E_2$. Now, we have $L(E_1)\subseteq L(E_2)$ if and only if $P_2\nto P_1$.  
\end{proof}

As witnessed by the simplicity of the reduction given in Proposition~\ref{prop:uniform-N-HOM-PSPACE-h}, the PSPACE-hardness of \textsc{\rgphom} is inherently caused by the hardness of the underlying regular language problem.

\subsection{The non-uniform case}

We now show that the non-uniform version of the problem still remains PSPACE-hard, even for a very simple template.

\begin{proposition}\label{prop:nonuniform-N-HOM-PSPACE-c}
Let $\Sigma$ be a fixed alphabet of size at least~$2$, and let $D_2^\Sigma$ be the RGP of order~$2$ over $\Sigma$ consisting of a single arc labelled $\Sigma^\ast$. Then, \textsc{\rgphom($D_2^\Sigma$)} is PSPACE-complete.
\end{proposition}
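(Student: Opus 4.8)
The plan is to prove both membership in PSPACE and PSPACE-hardness for \textsc{\rgphom($D_2^\Sigma$)}, where $D_2^\Sigma$ is a single arc labelled $\Sigma^\ast$ between two vertices $u$ and $v$.

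For \textbf{hardness}, I would reduce from \textsc{Regular Language Universality}, which is PSPACE-complete by Theorem~\ref{thm:language-inclusion-PSPACE-c}. Given a regular expression $E$ over $\Sigma$, I would build an input RGP $P$ for \textsc{\rgphom($D_2^\Sigma$)} consisting of a single arc labelled $E$ from a vertex $x$ to a vertex $y$. The key observation is that an n-homomorphism $f$ of $P$ to $D_2^\Sigma$ must send $x,y$ to vertices of $D_2^\Sigma$, and for the arc $(x,y)$ there must be a directed walk $W$ in $D_2^\Sigma$ from $f(x)$ to $f(y)$ with $L(E_{D_2^\Sigma}(W))\subseteq L(E)$. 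Since $D_2^\Sigma$ has a single arc labelled $\Sigma^\ast$ from $u$ to $v$ and no other arcs (in particular no loop at $u$ or $v$, and no arc back from $v$ to $u$), the only directed walks are the empty walk (giving the empty-word language $\{\varepsilon\}$) and the single-arc walk from $u$ to $v$ (giving $\Sigma^\ast$). Thus the only way to route a nontrivial arc is to set $f(x)=u$, $f(y)=v$, and use $W$ equal to the single arc, which forces $\Sigma^\ast=L(E_{D_2^\Sigma}(W))\subseteq L(E)$, i.e. $L(E)=\Sigma^\ast$. Conversely, if $L(E)=\Sigma^\ast$ this mapping is a valid n-homomorphism. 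Hence $P\nto D_2^\Sigma$ if and only if $L(E)=\Sigma^\ast$, giving PSPACE-hardness.

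For \textbf{membership in PSPACE}, I would argue that testing $P\nto D_2^\Sigma$ reduces to polynomially many instances of \textsc{Regular Language Inclusion}, each of which is in PSPACE by Theorem~\ref{thm:language-inclusion-PSPACE-c}, and then invoke Savitch-style closure (PSPACE is closed under this kind of bounded composition). The mapping $f$ sends every vertex of $P$ to either $u$ or $v$. For each of the (at most two) candidate images of each endpoint, an arc $(x,y)$ of $P$ with label $E_P(x,y)$ can be realized provided there is a directed walk $W$ in $D_2^\Sigma$ from $f(x)$ to $f(y)$ with $L(E_{D_2^\Sigma}(W))\subseteq L(E_P(x,y))$; as noted above, the relevant walk labels are exactly $\{\varepsilon\}$ (empty walk, only when $f(x)=f(y)$) and $\Sigma^\ast$ (the arc, only when $f(x)=u,f(y)=v$). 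So each arc imposes a simple local constraint on $(f(x),f(y))$ together with an inclusion check $\Sigma^\ast\subseteq L(E_P(x,y))$ or $\{\varepsilon\}\subseteq L(E_P(x,y))$. The only genuine subtlety is that the choice of image is global: I need a single consistent assignment $f:V(D_P)\to\{u,v\}$ satisfying all arc constraints simultaneously. I would precompute, using the PSPACE inclusion oracle, for each arc which of the allowed image-pairs are feasible, and then observe that the surviving constraints are $2$-colouring-type (each arc restricts $(f(x),f(y))$ to an explicit subset of $\{u,v\}^2$); satisfiability of such a conjunction of binary constraints over a two-element domain is an instance of $2$-colouring / $2$-SAT, decidable in polynomial time. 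Composing the polynomially many PSPACE inclusion tests with this polynomial-time consistency check keeps the whole procedure in PSPACE.

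The main obstacle I anticipate is making the walk-label analysis fully rigorous: I must verify that $D_2^\Sigma$ admits \emph{only} the empty walk and the single forward arc as directed walks, so that no unexpected walk label (in particular no iterated or returning walk) can satisfy a more permissive inclusion and thereby realize an arc that the $\{\varepsilon\},\Sigma^\ast$ dichotomy would forbid. Once this structural restriction on walks in $D_2^\Sigma$ is pinned down, both directions follow cleanly, and the matching upper and lower bounds establish PSPACE-completeness.
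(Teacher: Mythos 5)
Your overall strategy coincides with the paper's: the hardness direction is the same reduction from \textsc{Regular Language Universality} using a single arc labelled $E$, and the membership direction likewise boils down to a polynomial-time structural check on the underlying digraph plus universality tests on the arc labels. However, there is one concrete inconsistency you need to resolve, concerning the empty walk. In your membership analysis you treat the empty walk (with label language $\{\varepsilon\}$) as an admissible witness for an arc whenever $f(x)=f(y)$, but in your hardness analysis you assert that ``the only way to route a nontrivial arc is to set $f(x)=u$, $f(y)=v$'' and use the single arc of $D_2^\Sigma$. These two positions are incompatible. If the empty walk really is an admissible witness, the reduction breaks: take $\Sigma=\{a,b\}$ and $E=a^\ast$, so that $\varepsilon\in L(E)$ but $L(E)\neq\Sigma^\ast$; then mapping both endpoints of the arc of $P$ to $u$ and witnessing the arc by the empty walk would yield $P\nto D_2^\Sigma$ even though $E$ is not universal, so the equivalence ``$P\nto D_2^\Sigma$ iff $L(E)=\Sigma^\ast$'' fails.

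The repair is simply to fix the convention, which is what the paper implicitly does: a directed walk witnessing an arc must consist of at least one arc (a walk is a nonempty sequence of arcs, and ``from $f(x)$ to $f(y)$'' is otherwise not even well defined for the empty sequence). Under that convention your hardness argument is exactly the paper's and is correct, and your membership argument simplifies: every arc of $P$ must be witnessed by the unique arc of $D_2^\Sigma$, so $P\nto D_2^\Sigma$ if and only if the underlying digraph of $P$ admits a homomorphism to a single arc (a polynomial-time check; your $2$-colouring consistency step, or the paper's ``balanced digraph of height $1$'' criterion) and every arc label $E$ of $P$ satisfies $\Sigma^\ast\subseteq L(E)$, which is a polynomial number of \textsc{Regular Language Universality} tests, each in PSPACE. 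State the nonempty-walk convention explicitly and delete the $\{\varepsilon\}$ case from both directions, and the proof is complete.
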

\begin{proof}
To see that the problem is in PSPACE, note that the homomorphism part of the problem is simple. Let $P$ be the input RGP. An n-homomorphism of $P$ to $D_2^\Sigma$ exists if and only if the underlying digraph of $P$ maps to a single arc (this is the case if and only if it is a balanced digraph of height $1$), and for every arc label $E$ of $P$, $\Sigma^\ast\subseteq L(E)$ (that is, $\Sigma^\ast=L(E)$). Since \textsc{Regular Language Universality} is in PSPACE (\cite{AHU74}, see Theorem~\ref{thm:language-inclusion-PSPACE-c}), \textsc{\rgphom($D_2^\Sigma$)} is in PSPACE.
  
  To show that \textsc{\rgphom($D_2^\Sigma$)} is PSPACE-hard, we reduce \textsc{Regular Language Universality} to it. Given an input $E$ (over alphabet $\Sigma$) of \textsc{Regular Language Universality}, we construct the RGP $P$ on two vertices consisting of a single arc labelled $E$. Now, we have $L(E)=\Sigma^\ast$ if and only if $P\nto D_2^\Sigma$.
\end{proof}

\subsection{Testing for being an n-core}

We now give a similar reduction that shows that testing if an RGP is an n-core is also PSPACE-hard. We define the following decision problem:

\decisionpb{\textsc{N-Core}}{A RGP $P$.}{Is $P$ an n-core?}{0.7}

\begin{proposition}\label{prop:N-core-PSPACE}
\textsc{N-Core} is PSPACE-hard.
\end{proposition}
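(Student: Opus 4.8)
The plan is to reduce from \textsc{Regular Language Universality}, exactly in the spirit of the two previous propositions, by encoding the question ``$L(E)=\Sigma^\ast$?'' into a single small RGP whose status as an n-core flips precisely at universality. Concretely, given a regular expression $E$ over $\Sigma$, I would build the RGP $P_E$ on three vertices $0,1,2$ with arcs $(0,1)$ and $(1,2)$ both labelled $\Sigma^\ast$ and a ``shortcut'' arc $(0,2)$ labelled $E$; the underlying digraph is the transitive triangle $0\to 1\to 2$, $0\to 2$. The construction is plainly polynomial, so everything rests on the equivalence
\[
P_E \text{ is \emph{not} an n-core} \iff L(E)=\Sigma^\ast .
\]
Since \textsc{PSPACE} is closed under complementation, the set $\{E : L(E)\neq\Sigma^\ast\}$ is PSPACE-complete by Theorem~\ref{thm:language-inclusion-PSPACE-c}, and the equivalence above gives a many-one reduction of it to \textsc{N-Core}.

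The forward implication is the easy one. If $L(E)=\Sigma^\ast$, then deleting the arc $(0,2)$ yields the proper sub-RGP $P'$ that is the directed path $0\to 1\to 2$, and the identity vertex-map is an n-homomorphism $P_E\nto P'$: the two $\Sigma^\ast$-arcs map to themselves, while the arc $(0,2)$ labelled $E$ is witnessed by the walk $0\to 1\to 2$, whose label defines $L(\Sigma^\ast\Sigma^\ast)=\Sigma^\ast\subseteq L(E)$. Hence $P_E$ is not an n-core.

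The substance of the argument is the converse: assuming $L(E)\neq\Sigma^\ast$, I must show that $P_E\nto P'$ fails for \emph{every} proper sub-RGP $P'$ of $P_E$. The key observation, which I would isolate as a small claim, is that every nonempty directed walk lying inside the $\Sigma^\ast$-arcs has label defining $\Sigma^\ast$ (it is a concatenation of $\Sigma^\ast$'s), so realizing the arc $(0,2)$—which carries the label $E$—by such a walk would require $\Sigma^\ast\subseteq L(E)$, contradicting non-universality. One then splits on whether $P'$ retains the $E$-arc. If it does not, then $P'$ consists only of $\Sigma^\ast$-arcs, and the arc $(0,2)$ can only be routed along a nonempty walk of label $\Sigma^\ast$, which is impossible; if $P'$ does keep the $E$-arc $(0,2)$, then it is missing one of the two $\Sigma^\ast$-arcs (or the vertex $1$, which amounts to the same), and a short case analysis shows that forcing the source $E$-arc onto the surviving $E$-arc pins down the images of $0$ and $2$, after which the sink structure of the remaining acyclic target leaves no walk available to route one of the two $\Sigma^\ast$-arcs.

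The step I expect to be most delicate is precisely this converse case analysis, and in particular the need to rely on the convention that the walks witnessing an n-homomorphism are \emph{nonempty}: if length-$0$ walks were permitted, then whenever $\epsilon\in L(E)$ one could collapse all of $P_E$ onto a single vertex, and $P_E$ would never be an n-core, breaking the reduction. I would therefore make this convention explicit (it is already implicit in the characterization used for $D_2^\Sigma$ in Proposition~\ref{prop:nonuniform-N-HOM-PSPACE-c}, where mapping to a single arc forces ``balanced of height~$1$''), and organize the converse around the slogan ``the $E$-arc must be covered by a genuine walk, and no available genuine walk is cheap enough unless $E$ is universal,'' so that the enumeration of proper sub-RGPs collapses to the two clean cases above.
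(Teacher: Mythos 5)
Your reduction is correct, and it is the same \emph{kind} of argument as the paper's---a constant-size gadget whose n-core status flips exactly at a PSPACE-hard regular-language condition, finished off by closure of PSPACE under complementation---but the gadget and the source problem genuinely differ. The paper reduces from \textsc{Regular Language Inclusion}: its gadget is an out-star on $\{x,y,z\}$ with arcs $(x,y)$ labelled $E_1$ and $(x,z)$ labelled $X+E_2$ over the \emph{extended} alphabet $\Sigma\cup\{X\}$, where the fresh letter $X$ is what blocks the unwanted retraction of the $E_1$-arc onto the $E_2$-arc, leaving only the retraction that tests $L(E_1)\subseteq L(E_2)$. You instead reduce from \textsc{Regular Language Universality} (a special case, still PSPACE-complete by Theorem~\ref{thm:language-inclusion-PSPACE-c}), stay over the original alphabet, and use the transitive-triangle structure plus the nonempty-walk convention to kill all unwanted retractions; your case analysis over the proper sub-RGPs is complete (every sub-RGP retaining the $E$-arc forces vertex $1$ onto a sink, and every one dropping it only offers walks whose label defines $\Sigma^\ast$). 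Your explicit flagging of the convention that witnessing walks are nonempty is a genuine point: both your gadget and the paper's silently depend on it (the characterization of n-homomorphisms to $D_2^\Sigma$ in Proposition~\ref{prop:nonuniform-N-HOM-PSPACE-c} already presupposes it), and it is worth stating. The trade-off is minor: the paper enlarges the alphabet but reduces from the more general inclusion problem, whereas your construction keeps $\Sigma$ fixed, which is slightly cleaner if one cares about hardness over a fixed alphabet.
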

\begin{proof}
We reduce \textsc{Regular Language Inclusion} to \textsc{N-Core}. Given an input $E_1$, $E_2$ of \textsc{Regular Language Inclusion} (over alphabet $\Sigma$), we construct an RGP $P(E_1,E_2)$ over alphabet $\Sigma\cup\{X\}$, where $X\notin\Sigma$. We have $V(P)=\{x,y,z\}$ and $P$ contains an arc $(x,y)$ labelled $E_1$ and an arc $(x,z)$ labelled $X+E_2$. Now, we have $L(E_1)\subseteq L(E_2)$ if and only if $P$ is not a core. Since PSPACE=coPSPACE, we are done.
\end{proof}

\section{An EXPTIME algorithm for \textsc{\rgphom}}\label{sec:decid}

In this section, we show that \textsc{\rgphom} is decidable by an EXPTIME algorithm.

Note that in certain models where \emph{simple directed paths} rather than directed walks are considered, like in~\cite{MW95}, or when the target RGP is acyclic, there is a simple PSPACE algorithm to decide \textsc{\rgphom}. Indeed, in those cases, the length of a walk is polynomial. Assume we want to check whether $P\nto Q$. We can iterate over all possible mappings  and all possible walks: for a mapping $f:V(D_P)\to V(D_Q)$ and, for each mapped pair $(\{x,y\},\{f(x),f(y)\})$ of vertices and each walk $W$ from $f(x)$ to $f(y)$, we check in polynomial space whether $L(E_Q(W))\subseteq E_P(x,y)$.

However, in general, the walks may be arbitrarily long. As we will see, we can still bound their maximum length and give an EXPTIME algorithm for \textsc{\rgphom}.

For a regular language $L$ over alphabet $\Sigma$ and a positive integer $n$, we denote by $L_{|n}$ the \emph{$n$-truncation} of $L$, that is, the set of words of $L$ whose length is at most $n$.

\begin{lemma}\label{lemm:truncations}
Let $A$, $B_1,\ldots B_k$ be a collection of regular expressions over alphabet $\Sigma$, and let $n_A$, $n_i$ be the minimum number of states of an NFA recognizing $L(A)$ and $L(B_i)$, respectively. Then, we have that $L(B_1)\cdots L(B_k)\subseteq L(A)$ if, and only if, $L(B_1)_{|n_A n_{1}}\cdots L(B_k)_{|n_A n_{k}}\subseteq L(A)$.
\end{lemma}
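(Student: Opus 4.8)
```latex
The plan is to prove the two directions separately. The forward direction is immediate: if $L(B_1)\cdots L(B_k)\subseteq L(A)$, then since each truncation $L(B_i)_{|n_A n_i}$ is a subset of $L(B_i)$, the product of truncations is a subset of the full product, hence contained in $L(A)$ by transitivity of inclusion. The real content is the converse, and this is where I expect the main obstacle to lie.

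For the converse, I would argue contrapositively: assuming $L(B_1)\cdots L(B_k)\not\subseteq L(A)$, I want to exhibit a witness word already inside the truncated product $L(B_1)_{|n_A n_1}\cdots L(B_k)_{|n_A n_k}$ that is not in $L(A)$. So suppose there are words $w_i\in L(B_i)$ whose concatenation $w=w_1\cdots w_k$ lies outside $L(A)$. Fix a minimal NFA $M_A$ for $L(A)$ on $n_A$ states and, for each $i$, a minimal NFA $M_i$ for $L(B_i)$ on $n_i$ states. The key idea is a simultaneous pumping/shortening argument: I want to replace each $w_i$ by a shorter word $w_i'\in L(B_i)$ of length at most $n_A n_i$, while preserving the property that the concatenation stays out of $L(A)$. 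The natural device is to track, as we read $w_i$ letter by letter, the pair consisting of the current state of $M_i$ together with the set of states of $M_A$ reachable from the start of $M_A$ after reading the prefix $w_1\cdots w_{i-1}$ followed by the current prefix of $w_i$. Since $M_A$ is an NFA, I should use the subset construction on $M_A$, but to keep the bound $n_A n_i$ rather than $n_i 2^{n_A}$, I must be more careful.

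The correct tracking is as follows. For the concatenation to avoid $L(A)$, it suffices to track the reachable-state set of $M_A$; but subset-state blowup would give $2^{n_A}$, not $n_A$. The trick yielding the sharp bound $n_A n_i$ is to fix a single accepting run-or-failure: since $w\notin L(A)$, for every accepting state $q_f$ of $M_A$ and every run on $w$, the run does not end in $q_f$. I would instead track, within segment $w_i$, the product automaton $M_i\times M_A$, reading $w_i$ in $M_i$ and reading $w_i$ in $M_A$ starting from one particular state reached at the boundary between segments $i-1$ and $i$. The product has $n_i n_A$ states, so if $|w_i|>n_A n_i$ then some state of $M_i\times M_A$ repeats along the chosen run, and I can excise the corresponding loop. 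Excising this loop shortens $w_i$, keeps it in $L(B_i)$ (the $M_i$-component still accepts), and keeps the $M_A$-trajectory on this run reaching the same state at the end of segment $i$, so the glued run on the full modified word reaches the same final $M_A$-state as before, which is non-accepting. Iterating over all segments independently brings every $|w_i|$ down to at most $n_A n_i$ while the concatenation remains rejected by $M_A$ on this run.

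The main obstacle is making the simultaneous shortening coherent across segments: shortening $w_i$ changes the $M_A$-state at the boundary into segment $i+1$, which threatens to invalidate the pumping analysis already performed on $w_{i+1},\ldots,w_k$. I would resolve this by processing the segments left to right and committing to one fixed non-accepting run of $M_A$ on the whole word $w$; the loop-excisions in segment $i$ are chosen along this fixed run, so the boundary state entering segment $i+1$ is exactly the state this run had there, and the downstream analysis is performed relative to that same run. Because excision along a single fixed run never alters the $M_A$-state at the end of a segment, the boundary states are stable and the argument composes cleanly; at the end, the shortened words $w_i'$ have $|w_i'|\le n_A n_i$, each $w_i'\in L(B_i)$, and $w_1'\cdots w_k'\notin L(A)$, contradicting containment of the truncated product and completing the contrapositive.
```
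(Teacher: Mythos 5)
Your forward direction is fine, and your overall device (a product automaton, cycle excision, shortening a witness segment by segment) is essentially the same as the paper's, which runs the dual, non-contrapositive version: induction on length vectors, prune a cycle of $\mathcal{A}\times\mathcal{A}_j$, apply the induction hypothesis to the shortened word, then try to re-insert the cycle. However, the step you introduce precisely to dodge the $2^{n_A}$ blow-up is where your proof breaks. Fixing one non-accepting run of $M_A$ on $w$ and excising a loop along the corresponding run of $M_i\times M_A$ only guarantees that \emph{that particular run}, after modification, still ends in a non-accepting state on the shortened word $w'$. Since $M_A$ is nondeterministic, $w'\notin L(A)$ requires that \emph{every} run of $M_A$ on $w'$ be rejecting, and $w'$ may perfectly well be accepted along a run unrelated to the one you committed to. To control all runs simultaneously you must track the set of all $M_A$-states reachable after each prefix, i.e., perform the subset construction; that is why the honest bound coming from this argument is $n_i\cdot 2^{n_A}$ (product of $M_i$ with a \emph{deterministic} automaton for the complement of $L(A)$), not $n_A n_i$. (The paper's own write-up has the mirror-image gap: the accepting run of $\mathcal{A}$ on $w'$ supplied by the induction hypothesis need not pass through the product state at which the cycle was cut, so the cycle cannot simply be re-inserted.)

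This is not a presentational issue that a more careful bookkeeping of runs could repair: the stated bound is false. Already for $k=1$ the lemma would assert that whenever $L(B_1)\not\subseteq L(A)$ there is a witness of length at most $n_A n_1$, which would place \textsc{Regular Language Inclusion} in coNP by guess-and-check, contradicting Theorem~\ref{thm:language-inclusion-PSPACE-c} unless NP equals PSPACE. Concretely, over $\Sigma=\{a\}$ take $L(B_1)=a^*$ (so $n_1=1$) and $L(A)=\{\varepsilon\}\cup\{a^m: m\not\equiv 0 \pmod 2 \text{ or } m\not\equiv 0\pmod 3\text{ or } m\not\equiv 0\pmod 5\}$, recognized by an NFA with $11$ states (an accepting initial state feeding disjoint cycles of lengths $2$, $3$ and $5$). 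Then every word of length at most $n_An_1\le 11$ lies in $L(A)$, yet $a^{30}\in L(B_1)\setminus L(A)$. Any correct version of the lemma must use a truncation length of the form $n_i$ times the size of a DFA for $L(A)$ (so $n_i\cdot 2^{n_A}$ in the worst case), and the ``naive'' reachable-set tracking that you set aside is in fact the right proof of that weaker statement.
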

\begin{proof} It is clear that if $L(B_1)\cdots L(B_k)\subseteq L(A)$, then also $L(B_1)_{|n_A n_{1}}\cdots L(B_k)_{|n_A n_{k}}\subseteq L(A)$, since $L(B_i)_{|n_A n_i}\subseteq L(B_i)$ for every $i$ with $1\leq i\leq k$.
  
  For the converse, we assume that $L(B_1)_{|n_A n_{1}}\cdots L(B_k)_{|n_A n_{k}}\subseteq L(A)$. That is, any word $w_1\cdots w_k$ of $L(B_1)\cdots L(B_k)$ with $|w_i|\leq n_A n_i$ for every $i$ with $1\leq i\leq k$, belongs to $L(A)$. We need to prove that all words of $L(B_1)\cdots L(B_k)$ (without length restriction) belong to $L(A)$.

  We  proceed by induction on the vectors of subword lengths of words in $L(B_1)\cdots L(B_k)$. For such a word $w_1\cdots w_k$, this associated vector is $(|w_1|,\ldots,|w_k|)$, and these vectors are ordered lexicographically. The induction hypothesis is that all words of $L(B_1)\cdots L(B_k)$ whose associated vector is at most $(l_1,\ldots,l_k)$ (where for any $i$ with $1\leq i\leq k$, $l_i$ is a positive integer), belongs to $A$. By our assumption, the case where $l_i\geq n_A n_i$ is true.

Now, consider a word $w=w_1\cdots w_k$ of $L(B_1)\cdots L(B_k)$, whose associated vector is $(|w_1|,\ldots,|w_k|)$, and where for some $j\in\{1,\ldots,k\}$, $|w_j|=l_j+1$; whenever $i\neq j$, $|w_i|\leq l_i$. Let $\mathcal{A}$ and $\mathcal{A}_j$ be two NFAs recognizing $A$ and $B_j$ with smallest numbers $n_A$ and $n_j$ of states, respectively.

  We consider the product automaton $\mathcal{A}\times\mathcal{A}_j$ of $\mathcal{A}$ and $\mathcal{A}_j$, with set of states $S\times S_j$ (where $S$ and $S_j$ are the sets of states of $\mathcal{A}$ and $\mathcal{A}_j$, respectively), and a transition $((s_1,s_2),a,(s_1',s_2'))$ only if we have the transitions $(s_1,a,s_1')$ and $(s_2,a,s_2')$ in $\mathcal{A}$ and $\mathcal{A}_j$, respectively (all other transitions are ``dummy transitions'' to a ``garbage state''). Consider the run of $\mathcal{A}\times\mathcal{A}_j$ for the word $w_j$. The crucial observation is that, because $|w_j|=l_j+1>n_A n_j$, this run necessarily visits two states of $\mathcal{A}\times\mathcal{A}_j$ twice, that is, the run contains a directed cycle. Consider the shorter run obtained by pruning this cycle. The two runs start and end at the same two states of $\mathcal{A}\times\mathcal{A}_j$. The shorter run corresponds to a word $w_j'$ of length at most $|w_j|-1\leq l_j$. Since $w_j\in L(B_j)$, the end state of these runs is a pair containing an accepting state of $\mathcal{A}_j$ (thus $w'_j$ belongs to $L(B_j)$ as well). Thus, the word $w'$ obtained from $w$ by replacing $w_j$ with $w_j'$ belongs to $L(B_1)\cdots L(B_k)$, and $w'$ satisfies the induction hypothesis. Thus, $w'$ belongs to $L(A)$. But now, considering the pruned cycle in $\mathcal{A}\times\mathcal{A}_j$, we can build a valid run for $w_j$ in $\mathcal{A}\times\mathcal{A}_j$ that leads to a valid run for $w$ in $\mathcal{A}$. This proves the inductive step and concludes the proof.\end{proof}

We will now apply Lemma~\ref{lemm:truncations} to the case of matching a regular expression to a walk in an RGP.

\begin{proposition}\label{prop:onewalk}
  Let $E$ be a regular expression over alphabet $\Sigma$, and $\mathcal{A}_E$ an NFA with $n_E$ states recognizing $L(E)$. Let $Q=(D,E)$ be an RGP over $\Sigma$. For any two vertices $u$ and $v$ in $Q$, we can compute a walk $W$ from $u$ to $v$ satisfying $L(E(W))\subseteq L(E)$ (if one exists), in time $2^{O(n_E|Q|\log(|E|+|Q|))}$. Moreover, if such a walk exists, then there exists one of length at most $2^{n_E}|Q|$.
\end{proposition}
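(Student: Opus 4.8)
The plan is to turn the mixed quantifier statement ``there is a walk $W$ such that \emph{all} words spelled by $W$ lie in $L(E)$'' into a reachability computation in a product of $Q$ with a determinised copy of $\mathcal{A}_E$, and to read off both the algorithm and the length bound from the size of that product. The guiding idea is that inclusion (as opposed to mere membership) is a universal condition over words, and the only clean way to manage a universal quantifier over words of an NFA is to determinise it.

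First I would apply the subset construction to $\mathcal{A}_E$, obtaining an automaton $\mathcal{D}_E$ whose states are the subsets $T\subseteq S$ of states of $\mathcal{A}_E$ (at most $2^{n_E}$ of them), a subset being accepting exactly when it meets the final states of $\mathcal{A}_E$. The point of determinising is that membership of a single word in $L(E)$ is then decided by one run, which is what makes it possible to reason about all words that realise a given walk. Next I would describe how a single arc of $Q$ acts on $\mathcal{D}_E$: traversing an arc $a$ amounts to reading an arbitrary word of $L(E_Q(a))$, so from a state $T$ the relevant data is which states of $\mathcal{D}_E$ are reachable by reading some word of $L(E_Q(a))$ from $T$. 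This one-arc step is itself a reachability in the product of $\mathcal{D}_E$ with an automaton $\mathcal{A}_a$ for $E_Q(a)$, and here Lemma~\ref{lemm:truncations} is the essential tool: with $A=E$ and $B=E_Q(a)$ it guarantees that only words of length at most $n_E\cdot n_a$ need be inspected on the arc (where $n_a$ is the number of states of $\mathcal{A}_a$). Thus each one-arc step is computable in single-exponential time, and the global condition $L(E(W))\subseteq L(E)$ becomes a finitary, automaton-theoretic object rather than a statement about unboundedly long words.

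Then I would set up the search on a product graph $\mathcal{H}$ whose nodes lie in $V(Q)\times 2^{S}$, so that $|\mathcal{H}|=O(2^{n_E}|Q|)$, with source $(u,T_0)$ (the initial subset), target nodes $(v,T)$, and arcs implementing the one-arc steps above. A candidate walk corresponds to a sequence of arcs of $Q$, and its validity is the requirement that every run of $\mathcal{D}_E$ induced by that arc sequence reaches an accepting subset over $v$. The decision and an explicit walk are obtained by a traversal of $\mathcal{H}$, in time polynomial in $|\mathcal{H}|$ and in the cost of the one-arc steps, which is where the stated $2^{O(n_E|Q|\log(|E|+|Q|))}$ bound comes from. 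For the length bound I would use a pigeonhole argument on $\mathcal{H}$: a shortest witnessing traversal repeats no node of $V(Q)\times 2^{S}$ and hence has length at most $|\mathcal{H}|=2^{n_E}|Q|$, projecting to a walk of the same length in $Q$.

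The main obstacle, and the step that needs the most care, is faithfully encoding the \emph{universal} quantifier hidden in $L(E(W))\subseteq L(E)$: one must certify that \emph{every} word spelled by the walk lies in $L(E)$, not merely that some word does, and since a single arc can be realised by many words these runs can fan out through several subsets of $\mathcal{D}_E$. This is exactly what forces the determinisation (and with it the $2^{n_E}$ factor), what makes the one-arc transition a global object over the entire language of the arc rather than a letter-by-letter move, and what must be shown to be correctly summarised by the subset recorded at each node of $\mathcal{H}$; the truncation lemma is precisely what keeps that summary finite and computable. Proving soundness and completeness of the reachability encoding against this ``all realising runs stay accepting'' condition, and checking that the pigeonhole contraction preserves it, is where the real work lies.
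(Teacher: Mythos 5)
Your overall architecture matches the paper's: use Lemma~\ref{lemm:truncations} to make each arc language effectively finite, build an auxiliary graph on $V(Q)\times 2^{\mathcal S}$ (where $\mathcal S$ is the state set of $\mathcal A_E$), decide by reachability from a source over $u$ to accepting targets over $v$, and obtain the length bound $2^{n_E}|Q|$ by pigeonhole on the vertices of that graph. The gap sits exactly in the step you yourself flag as ``where the real work lies'': you never say how plain reachability in your product $\mathcal H$ captures the universal condition, and with the arc relation you actually define it does not. Your arcs record ``which states of $\mathcal D_E$ are reachable by reading \emph{some} word of $L(E_Q(a))$ from $T$''; a source-to-target path in that graph therefore exhibits only one word per arc whose concatenation is accepted, i.e.\ it certifies $L(E(W))\cap L(E)\neq\emptyset$ rather than $L(E(W))\subseteq L(E)$. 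Determinisation alone does not repair this: even in $\mathcal D_E$, a single walk of $Q$ induces a whole \emph{tree} of runs (one branch per choice of word on each arc), and ``every branch ends in an accepting subset'' is not a reachability property of $V(Q)\times 2^{\mathcal S}$; tracking the exact set of reachable $\mathcal D_E$-states would cost $2^{2^{n_E}}$ and destroy the claimed bounds.

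The paper's resolution is to put the universal quantifier inside the arc relation of the auxiliary graph rather than into the acceptance condition: it creates an arc from $(S_1,x)$ to $(S_2,y)$ only if \emph{every} word of the truncated arc language, read from \emph{every} state of $S_1$, reaches some state of $S_2$, and the targets are the vertices $(S_f,v)$ with $S_f$ contained in the accepting states. A single directed path then soundly certifies that every word spelled by the corresponding walk has an accepting run; the existential choice is over the ``covering sets'' $S_i$, not over words, which is what keeps the graph at $2^{n_E}|Q|$ vertices. Note that these subsets are \emph{not} determinised states (``all states reachable so far'') but existentially guessed sets that every word must hit with at least one run --- this is the idea missing from your write-up. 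You are right that this direction is delicate: one must still argue that whenever a good walk exists, suitable sets $S_i$ can actually be chosen despite the nondeterminism of $\mathcal A_E$ (the converse direction of the correspondence), and this is precisely the point that needs care and that the paper itself treats rather tersely.
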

\begin{proof}
Since $E$ and $Q$ are finite, we will assume that $|\Sigma|\leq |E|+|Q|$ (if not, we simply remove the unused symbols from $\Sigma$.)

By Lemma~\ref{lemm:truncations}, there is a walk $W$ in $Q$ from $u$ to $v$ such that $L(E(W))\subseteq L(E)$ if and only if there exists one in the RGP $Q'$ obtained from $Q$ by replacing each arc-label $E(x,y)$ by a regular expression defining the $n_En_B$-truncation $L(E(x,y))'$ of $L(E(x,y))$ (where $n_B$ is the smallest number of states of an NFA recognizing $L(E(x,y))$). Thus, we first compute $Q'$. Note that $L(E(x,y))'$ contains at most $|\Sigma|^{n_En_B}$ words.

 Next, we will construct an auxiliary digraph $G(E,Q,u,v)$. This digraph has vertex set $2^{\mathcal S}\times V(Q)$, where $\mathcal S$ is the set of states of $\mathcal{A}_E$.

 Given two states $s_1$ and $s_2$ of $\mathcal{A}_E$ and a word $w$ over $\Sigma$, we say that $w$ \emph{reaches} $s_2$ from $s_1$ in $\mathcal{A}_E$ if there exists a sequence of transitions of $\mathcal{A}_E$ starting at $s_1$ and ending at $s_2$ using the sequence of letters of $w$.

 Now, for two vertices $(S_1,u)$ and $(S_2,v)$ of $G(E,Q,u,v)$, we create the arc $((S_1,x),(S_2,y))$ if and only if, for each state $s$ of $S_1$ and each word $w$ of $L(E(x,y))'$, $w$ reaches a state of $S_2$ in $\mathcal{A}_E$.

Deciding whether $((S_1,x),(S_2,y))$ is an arc of $G(E,Q,u,v)$ takes time at most $|S_1|2^{n_E}|L(E(x,y))'|$, which is at most $|\Sigma|^{O(n_E|Q|)}$. Since there are $(2^{n_E}|Q|)^2$ pairs of vertices of $G(E,Q,u,v)$, overall the construction of $G(E,Q,u,v)$ can be done in time $|\Sigma|^{O(n_E|Q|)}$.

Now, we claim that there exists a walk $W$ from $u$ to $v$ with $L(E(W))\subseteq L(E(x,y))$ if and only if there is a directed path in $G(E,Q,u,v)$ from a vertex $(\{s_0\},x)$ to a vertex $(S_f,y)$, where $s_0$ is the initial state of $\mathcal{A}_E$, and $S_f$ is a subset of the accepting states of $\mathcal{A}_E$. Indeed, such a path corresponds precisely to a walk $W$ from $u$ to $v$ in $Q$, such that all the words of $L(W)$ are accepted by $\mathcal{A}_E$.

  This check can be done in linear time in the size of $G(E,Q,u,v)$ using a standard BFS search, thus we obtain an additional time complexity of $(2^{n_E}|Q|)^2$, which is also at most $|\Sigma|^{O(n_E|Q|)}$. Since $|\Sigma|\leq |E|+|Q|$ we obtain $2^{O(n_E|Q|\log(|E|+|Q|))}$.

  Finally, it is clear that the length of an obtained directed path of $G(E,Q,u,v)$ is at most the number of vertices of $G(E,Q,u,v)$, which is $2^{n_E}|Q|$, as claimed. This completes the proof.
\end{proof}

We are now ready to prove the main theorem of this section.

\begin{theorem}\label{thm:RGPHOM-EXPTIME}
\textsc{\rgphom} is in EXPTIME.
\end{theorem}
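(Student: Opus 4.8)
The plan is to combine the brute-force strategy over vertex-mappings with the single-walk subroutine provided by Proposition~\ref{prop:onewalk}, and to show that the total running time stays singly exponential in the input size. Let $P$ and $Q$ be the two input RGPs. An n-homomorphism is by definition a mapping $f:V(D_P)\to V(D_Q)$ such that every arc $(x,y)$ of $P$ is witnessed by a directed walk $W$ in $Q$ from $f(x)$ to $f(y)$ with $L(E_Q(W))\subseteq L(E_P(x,y))$. Crucially, the existence of such a witnessing walk for a given arc depends only on the pair $(f(x),f(y))$ and on the label $E_P(x,y)$; the arcs can therefore be checked independently once $f$ is fixed. So the algorithm iterates over all candidate mappings $f$, and for each one, checks every arc of $P$ separately using Proposition~\ref{prop:onewalk}.

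First I would bound the number of mappings: there are at most $|V(D_Q)|^{|V(D_P)|}\le |Q|^{|P|}=2^{|P|\log|Q|}$ of them, which is singly exponential. Next, for a fixed $f$ and a fixed arc $(x,y)$ of $P$ with label $E:=E_P(x,y)$, I would invoke Proposition~\ref{prop:onewalk} with $u=f(x)$, $v=f(y)$, and the regular expression $E$, to decide in time $2^{O(n_E|Q|\log(|E|+|Q|))}$ whether there is a walk $W$ from $f(x)$ to $f(y)$ in $Q$ with $L(E_Q(W))\subseteq L(E)$. Here $n_E$, the minimal NFA size for $L(E)$, is at most the length of the regular expression $E$, hence at most $|P|$, so this bound is again $2^{O(\mathrm{poly}(|P|,|Q|))}$. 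The map $f$ is an n-homomorphism if and only if all $|A(D_P)|\le |P|^2$ of these per-arc checks succeed; we accept iff some $f$ passes all its checks.

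For the overall bound I would multiply: the number of mappings ($2^{O(|P|\log|Q|)}$) times the number of arcs (polynomial) times the per-check cost ($2^{O(n_E|Q|\log(|E|+|Q|))}$ with $n_E,|E|\le|P|$). Since a product of a singly-exponential quantity with polynomially many singly-exponential quantities is still singly exponential, the total is of the form $2^{\mathrm{poly}(|P|,|Q|)}$, placing \textsc{\rgphom} in EXPTIME. One should take care that the NFA sizes $n_E$ used inside Proposition~\ref{prop:onewalk} can be computed (e.g.\ by the standard Thompson construction, whose output has linearly many states), so that the invocation is legitimate and its stated time bound applies with $n_E\le|P|$.

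I expect no single deep obstacle here, since the heavy lifting — bounding the relevant walk lengths and searching for a witnessing walk within a singly-exponential time budget — has already been done in Lemma~\ref{lemm:truncations} and Proposition~\ref{prop:onewalk}. The only point requiring mild care is the bookkeeping of how the various size parameters ($n_E$, $|E|$, $|Q|$, $|P|$, the number of mappings and arcs) compose, and verifying that their product remains singly exponential rather than doubly exponential; the key structural fact that makes this clean is that the arc-by-arc witnessing walks are chosen \emph{independently} for a fixed $f$, so we never have to search over an exponential product of walk choices simultaneously.
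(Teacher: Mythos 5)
Your proposal is correct and follows essentially the same route as the paper: enumerate all $|V(D_Q)|^{|V(D_P)|}$ vertex-mappings, check each arc of $P$ independently via Proposition~\ref{prop:onewalk}, and observe that the product of these singly-exponential factors remains singly exponential. Your added remark that one can use Thompson's construction (rather than a minimal NFA) to instantiate $n_E\le|E|\le|P|$ is a sensible clarification but does not change the argument.
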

\begin{proof}
  We proceed as follows. First, we go through all possible vertex-mappings of $V(P)$ to $V(Q)$ (there are $|V(Q)|^{|V(P)|}$ such possible mappings). Consider such a vertex-mapping, $f$.

  For each arc $(x,y)$ in $P$ with label $E(x,y)$, we proceed as follows. Let $\mathcal A$ be an NFA recognizing $L(E(x,y))$ with smallest possible number $n_A$ of states. We apply Proposition~\ref{prop:onewalk} to $E(x,y)$, $\mathcal A$ and $Q$, with $u=f(x)$ and $v=f(y)$: thus we can decide in time $2^{O(n_A|Q|\log(|E(x,y)|+|Q|))}$ whether the mapping $f$ satisfies the definition of an n-homomorphism for the arc $(x,y)$. If yes, we proceed to the next arc; otherwise, we abort and try the next possible mapping. If we find a valid mapping, we return YES. Otherwie, we return NO.

Our algorithm has a time complexity of $|V(Q)|^{|V(P)|}\cdot |P|\cdot 2^{O(|P||Q|\log(|P|+|Q|))}$. Let $n=|P|+|Q|$ be the input size. We obtain an overall running time of $2^{O(n^2\log n)}$, which is an EXPTIME running time.
\end{proof}

We obtain the following corollary.

\begin{corollary}\label{cor:n-core}
\textsc{N-Core} is in EXPTIME.
\end{corollary}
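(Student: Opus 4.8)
The plan is to reduce \textsc{N-Core} to \textsc{\rgphom} and invoke Theorem~\ref{thm:RGPHOM-EXPTIME}. The key observation is that an RGP $P$ fails to be an n-core precisely when it admits an n-homomorphism to one of its proper sub-RGPs. A naive approach would be to check, for every proper sub-RGP $C$ of $P$, whether $P\nto C$; there are exponentially many sub-RGPs, but this is harmless since each individual check is in EXPTIME and an exponential number of EXPTIME computations is still in EXPTIME. However, I would streamline this: since being an n-core is equivalent to having no endomorphism whose image is a proper sub-RGP, it suffices to search for a single ``collapsing'' endomorphism.

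First I would reformulate the condition. The RGP $P$ is \emph{not} an n-core if and only if there exists an n-homomorphism $f$ of $P$ into $P$ itself whose image (the sub-RGP induced by the mapped vertices together with the witnessing walks) is a proper sub-RGP. The cleanest way to detect this is to enumerate the candidate images. For each nonempty proper subset $S\subsetneq V(D_P)$, let $P[S]$ be the sub-RGP induced on $S$ (keeping all arcs of $P$ with both endpoints in $S$). Then $P$ is not an n-core if and only if $P\nto P[S]$ for some such $S$. There are fewer than $2^{|V(D_P)|}$ such subsets, and for each we run the \textsc{\rgphom} decision procedure on the instance $(P, P[S])$, which by Theorem~\ref{thm:RGPHOM-EXPTIME} terminates in time $2^{O(m^2\log m)}$ where $m$ is the size of the instance.

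The final step is the complexity bookkeeping. Each instance $(P,P[S])$ has size at most $2|P|$, so each call runs in time $2^{O(|P|^2\log|P|)}$. Multiplying by the fewer than $2^{|V(D_P)|}\le 2^{|P|}$ choices of $S$ leaves the total running time at $2^{O(|P|^2\log|P|)}$, still an EXPTIME bound. Hence $P$ is an n-core if and only if \emph{all} these calls return NO, and we answer accordingly; this decides \textsc{N-Core} in EXPTIME.

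I do not expect a serious obstacle here, since the corollary is essentially a packaging of Theorem~\ref{thm:RGPHOM-EXPTIME}. The only point requiring a little care is to confirm that it genuinely suffices to test the vertex-induced sub-RGPs $P[S]$ rather than all sub-RGPs (which could also drop arcs). This holds because an n-homomorphism of $P$ to a proper sub-RGP $C$ in particular misses a vertex or an arc of $P$; if it misses only an arc but no vertex, one checks that composing with the retraction still yields a witnessing map into some vertex-induced proper sub-RGP, or else one simply also ranges over proper arc-subsets, which only multiplies the count by another factor of $2^{|A(D_P)|}$ and does not affect the EXPTIME bound. Either way the enumeration stays within EXPTIME, so the result follows.
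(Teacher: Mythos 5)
Your final algorithm---enumerate the (exponentially many) proper sub-RGPs of $P$ and run the EXPTIME procedure of Theorem~\ref{thm:RGPHOM-EXPTIME} on each pair $(P,C)$---is exactly the paper's proof, and your complexity bookkeeping is fine. However, the ``streamlined'' version that you state as an equivalence in the body of the argument, namely that $P$ fails to be an n-core if and only if $P\nto P[S]$ for some proper vertex subset $S$, is false, and the suggestion that a witness mapping into an arc-deleted sub-RGP can always be retracted into a vertex-induced proper sub-RGP does not hold. Take $P$ with vertices $u,v,w$, arcs $(u,v)$ and $(v,w)$ labelled ``$a$'', and an arc $(u,w)$ labelled ``$a\cdot a$''. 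The identity map is an n-homomorphism of $P$ to the sub-RGP obtained by deleting the arc $(u,w)$ (the walk $u\to v\to w$, with label $a\cdot a$, witnesses that arc's constraint), so $P$ is not an n-core; yet $P$ has no n-homomorphism to any of the vertex-induced proper sub-RGPs $P[\{u,v\}]$, $P[\{v,w\}]$, $P[\{u,w\}]$ or the singletons, as one checks directly (in each case some ``$a$''-labelled arc of $P$ has no admissible image walk). So the enumeration must range over all sub-RGPs, dropping arcs as well as vertices---which is precisely your fallback clause and the paper's proof; with that branch taken, the argument is correct and the EXPTIME bound stands, since an exponential number of calls to an EXPTIME subroutine on inputs of size at most $|P|$ remains in EXPTIME.
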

\begin{proof}
To decide whether a given RGP $P$ is an n-core, it suffices to check, for each sub-RGP $Q$ of $P$, whether we have $P\nto Q$. Thus, an exponential number of applications of our EXPTIME algorithm for \textsc{\rgphom} is a valid EXPTIME algorithm for \textsc{N-Core}.
\end{proof}

Note that for two RGPs $P$ and $Q$, if $P\nto Q$ then the query $Q$ is contained in the query $P$, but there are examples where the converse does not hold (see Figure~\ref{fig:ex-query-containment}). Thus, the problem \textsc{\rgphom} for two RGPs does not fully capture \textsc{RGP Query Containment}. Nevertheless, we will show that the former can be solved in EXPTIME, which is better than the (tight) EXPSPACE complexity of \textsc{RGP Query Containment} shown in~\cite{CDLV00,FLS98}.

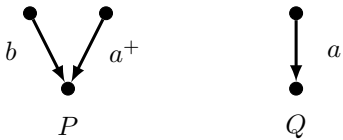
\begin{figure}[!htpb]
  \centering
  \scalebox{1}{\begin{tikzpicture}[join=bevel,inner sep=0.6mm]

      \node[draw,shape=circle,fill] (x) at (0,0) {};
      \draw (x)+(0.75,0.5) node {$a^+$};
      \draw (x)+(-0.75,0.5) node {$b$};
      \node[draw,shape=circle,fill] (y) at (-0.5,1) {};
      \node[draw,shape=circle,fill] (z) at (0.5,1) {};
      \node (P) at (0,-0.5) {$P$};
      
      \node[draw,shape=circle,fill] (u) at (3,0) {};
      \draw (u)+(0.5,0.5) node {$a$};
      \node[draw,shape=circle,fill] (v) at (3,1) {};
      \node (Q) at (3,-0.5) {$Q$};
      
      \draw[->,>=latex, line width=0.4mm] (y)--(x);
      \draw[->,>=latex, line width=0.4mm] (z)--(x);
      \draw[->,>=latex, line width=0.4mm] (v)--(u);

    \end{tikzpicture}}
      \caption{Two n-core RGPs $P$ and $Q$ over alphabet $\{a,b\}$
        which have no n-homomorphism in either direction. From $P$ to
        $Q$ because one can not map suitably the arc labelled by $b$,
        in the other direction because neither $b$ nor $a^+$ is
        included in $a$.
        However, any database that matches the RGP $P$ would contain a
        walk of arcs all labelled by $a$ (because of the arc with label
        $a^+$ in $P$). The database would clearly also match $Q$. 
        So $Q$ is contained in $P$.}
  \label{fig:ex-query-containment}
\end{figure}

\section{Specific RGP classes over a unary alphabet: the $\{a,a^+\}$ case}\label{sec:a-a+}

In this section, we consider that the alphabet $\Sigma$ is unary, say, $\Sigma=\{a\}$. It is known that some unary languages are undecidable (because the set of unary languages is uncountable, while the set of decidable languages is countable). Thus, even unary languages are highly nontrivial and it is of interest to study them. Note that for unary \emph{regular} languages, \textsc{Regular Language Inclusion} and \textsc{Regular Language Universality} are no longer PSPACE-complete but they are coNP-complete (see~\cite{HRS76} and~\cite{MS73}, respectively).

The case where all arc-labels of the considered RGPs are equal to ``$a$'' is equivalent to the problem of classic digraph homomorphisms, and thus it captures all CSPss, see~\cite{FV98}. When each label is either ``$a$'' or ``$a^+$'', we have two kinds of constraints: arcs labeled ``$a$'' must map in a classic, local, way, while for arcs labeled ``$a^+$'' can be mapped to an arbitrary (nontrivial) path in the target RGP. Thus, this setting is useful for example to model descendence relations in hierarchichal data such as XML. THis setting is for example used in languages like SPARQL or XPath for XML documents, that are tree-structured. We refer for example to the papers~\cite{CMNP16journal,FFM08,KS08,MS04}. For digraphs, this type of homomorphisms is related to the setting where one considers the \emph{graph power} of the template digraph, a problem studied in~\cite{powers}.

We first show how to transform such problems into a classic homomorphism problem. For an RGP $Q$ with arc-labels either ``$a$'' or ``$a^+$'', let $D(Q)$ be the two-arc-labeled digraph obtained from $Q$ by leaving all arcs labeled ``$a$'' untouched, and adding an arc with label ``$t$'' from a vertex $x$ to a vertex $y$ if and only if there is a directed path (regardless of any labels) from $x$ to $y$ in $Q$ (that is, the arcs labeled ``$t$'' induce the transitive closure of the digraph). This construction is computable in polynomial time. See Figure~\ref{fig:path-D(Q)} for an example (we adopt the style of~\cite{CMNP16journal} by marking the ``$a^+$'' arcs as doubled).

\begin{figure}[!htpb]
  \centering
  \scalebox{1}{\begin{tikzpicture}[join=bevel,inner sep=0.6mm]
      \begin{scope}
        \node[draw,shape=circle,fill] (a) at (0,0) {};
        \node[draw,shape=circle,fill] (b) at (0,1.5) {};
        \node[draw,shape=circle,fill] (c) at (0,3) {};
        \node[draw,shape=circle,fill] (d) at (0,4.5) {};
        \node[draw,shape=circle,fill] (e) at (0,6) {};
        
        \node (Q) at (0,-1) {$Q$};
        
        \draw[->,>=latex,double,line width=0.4mm] (e)--(d);
        \draw[->,>=latex,line width=0.4mm] (d)--(c);
        \draw[->,>=latex,double,line width=0.4mm] (c)--(b);
        \draw[->,>=latex,line width=0.4mm] (b)--(a);
      \end{scope}

      \begin{scope}[xshift=4cm]
        \node[draw,shape=circle,fill] (a) at (0,0) {};
        \node[draw,shape=circle,fill] (b) at (0,1.5) {};
        \node[draw,shape=circle,fill] (c) at (0,3) {};
        \node[draw,shape=circle,fill] (d) at (0,4.5) {};
        \node[draw,shape=circle,fill] (e) at (0,6) {};
        \node (DQ) at (0,-1) {$D(Q$)};

        \draw[->,>=latex,line width=0.4mm] (e)--(d) node[midway,fill=white] () {$t$};
        \draw[->,>=latex,line width=0.4mm] (d) .. controls ++(0.2,-0.4) and ++(0.2,0.4) .. (c) node[midway,fill=white] () {$a$};
        \draw[->,>=latex,line width=0.4mm] (d) .. controls ++(-0.2,-0.4) and ++(-0.2,0.4) .. (c) node[midway,fill=white] () {$t$};
        \draw[->,>=latex,line width=0.4mm] (c)--(b) node[midway,fill=white] () {$t$};
        \draw[->,>=latex,line width=0.4mm] (b) .. controls ++(0.2,-0.4) and ++(0.2,0.4) .. (a) node[midway,fill=white] () {$a$};
        \draw[->,>=latex,line width=0.4mm] (b) .. controls ++(-0.2,-0.4) and ++(-0.2,0.4) .. (a) node[midway,fill=white] () {$t$};
        \draw[->,>=latex,line width=0.4mm] (e) .. controls ++(-1,-0.8) and ++(-1,0.8) .. (c) node[midway,fill=white] () {$t$};
        \draw[->,>=latex,line width=0.4mm] (d) .. controls ++(-1,-0.8) and ++(-1,0.8) .. (b) node[midway,fill=white] () {$t$};
        \draw[->,>=latex,line width=0.4mm] (c) .. controls ++(-1,-0.8) and ++(-1,0.8) .. (a) node[midway,fill=white] () {$t$};
        \draw[->,>=latex,line width=0.4mm] (e) .. controls ++(1,-0.8) and ++(1,0.8) .. (b) node[midway,fill=white] () {$t$};
        \draw[->,>=latex,line width=0.4mm] (d) .. controls ++(1,-0.8) and ++(1,0.8) .. (a) node[midway,fill=white] () {$t$};
        \draw[->,>=latex,line width=0.4mm] (e) .. controls ++(2,-0.3) and ++(2,0.3) .. (a) node[midway,fill=white] () {$t$};
      \end{scope}

  \end{tikzpicture}}
  \caption{A path RGP $Q$ with arc-labels in $\{a,a^+\}$ (doubled arcs are labeled ``$a^+$'', the others are labeled ``$a$'') and the corresponding two-arc-labeled digraph $D(Q)$.}
  \label{fig:path-D(Q)}
\end{figure}
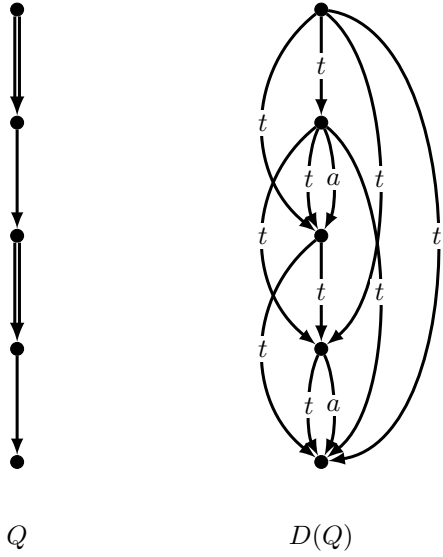

\begin{proposition}\label{prop:a-a+-reduction-to-HOM}
\textsc{\rgphom} restricted to RGPs with arc-labels either ``$a$'' or ``$a^+$'' is polynomially reducible to \HOM{} for two-arc-labeled digraphs.
\end{proposition}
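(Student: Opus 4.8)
The plan is to build, from the source RGP $P$, a two-arc-labeled digraph $D(P)$, and to pair it with the already-defined target $D(Q)$, so that an n-homomorphism $P\nto Q$ corresponds exactly to a classical homomorphism $D(P)\to D(Q)$ of two-arc-labeled digraphs. I would define $D(P)$ by a simple relabeling: keep every arc labeled ``$a$'' as an ``$a$'' arc, and turn every arc labeled ``$a^+$'' into an arc labeled ``$t$'' with the same endpoints (crucially, \emph{no} transitive closure is taken on the source side). Since the transitive closure $D(Q)$ is computable in polynomial time and the relabeling $D(P)$ is immediate, the whole construction is polynomial, so the only real content is the correctness equivalence, which I would phrase as $P\nto Q$ if and only if $D(P)\to D(Q)$.

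The heart of the argument is a purely local analysis of the n-homomorphism condition for a single arc, exploiting that the alphabet is unary. I would first observe that for an arc $(x,y)$ of $P$ labeled ``$a$'', the requirement $L(E_Q(W))\subseteq L(a)=\{a\}$ on the witnessing walk $W$ in $Q$ forces $W$ to be a single arc labeled ``$a$'': any walk of length at least two produces only words of length at least two, while a single ``$a^+$'' arc yields the language $a^+$ which contains $aa$; in either case the language is not contained in $\{a\}$. Hence the ``$a$'' arcs must be mapped to ``$a$'' arcs, exactly the classical homomorphism condition on the ``$a$''-labeled part. I would then observe that for an arc labeled ``$a^+$'', the requirement $L(E_Q(W))\subseteq L(a^+)=a^+$ is met by \emph{every} nonempty walk $W$ in $Q$, since each arc contributes at least the letter $a$ and so $L(E_Q(W))$ consists of nonempty words only. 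Thus such an arc can be mapped to $(f(x),f(y))$ precisely when $f(y)$ is reachable from $f(x)$ by a nonempty directed walk in $Q$---which is exactly the condition encoded by the ``$t$'' arcs of $D(Q)$, as these form the transitive closure.

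With these two observations, the equivalence $P\nto Q \iff D(P)\to D(Q)$ would follow by checking both directions arc by arc. An n-homomorphism $f$ sends each ``$a$'' arc to an ``$a$'' arc of $Q$ and each ``$a^+$'' (now ``$t$'') arc to a reachable pair, i.e.\ a ``$t$'' arc of $D(Q)$, so $f$ is a classical homomorphism $D(P)\to D(Q)$. Conversely, a classical homomorphism provides, for each ``$a$'' arc, a single ``$a$'' arc of $Q$ as its witnessing walk, and for each ``$t$'' arc, a witnessing directed path in $Q$ recovered from the transitive-closure arc, which is a valid nonempty walk for the ``$a^+$'' constraint.

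The main obstacle, and the step deserving the most care, is this local characterization of the witnessing walks: one must verify that ``$a$'' genuinely forces a length-one ``$a$'' arc (so the ``$a$''-part behaves like ordinary homomorphism) while ``$a^+$'' is equivalent to plain reachability (so the asymmetric treatment---transitive closure on the target $Q$ but mere relabeling on the source $P$---is correct). A secondary point to handle cleanly is the degenerate case $f(x)=f(y)$ for an ``$a^+$'' arc, where a valid witness needs a nonempty closed walk, that is, a directed cycle through $f(x)$ in $Q$; this matches precisely the presence or absence of a ``$t$''-loop at $f(x)$ in $D(Q)$, so no special casing is needed beyond recording that ``$t$'' captures nonempty, rather than reflexive, reachability.
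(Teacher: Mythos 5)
Your proposal is correct and follows essentially the same route as the paper's proof: relabel the ``$a^+$'' arcs of the source as ``$t$'' arcs (no closure on the source side) and map into the transitive-closure target $D(Q)$, with the equivalence justified by the observation that an ``$a$'' label forces a single ``$a$'' arc as witness while ``$a^+$'' amounts to nonempty reachability. Your write-up is in fact more careful than the paper's (which states the equivalence with only a one-line justification), particularly in verifying the local characterization of witnessing walks and the $f(x)=f(y)$ case.
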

\begin{proof}
Let $D_P'$ be the directed tree obtained from $D_Q$ by replacing ll arc-labels ``$a^+$'' by labels ``$t$''. For any instance $P,Q$ of \textsc{\rgphom}, we have that $P\nto Q$ (as a navigational homomorphism) if and only if $D'_P\to D(Q)$ (as a classic homomorphism of arc-labeled digraphs). Indeed, the arcs labeled ``$t$'' in $D(Q)$ join precisely those pairs to which the two vertices of an arc labeled ``$a^+$'' can map in a navigational homomorphism of $P$ to $Q$. There is no difference in the mapping of arcs labeled ``$a$''. This completes the proof.
\end{proof}

We show next that the special case we consider here is much easier than the general PSPACE-hard case.

\begin{proposition}\label{prop:a-a+-NP}
For RGPs over an alphabet $\Sigma$ with $a\in\Sigma$ whose arc-labels are in $\{a,a^+\}$, \textsc{\rgphom} is in $NP$.
\end{proposition}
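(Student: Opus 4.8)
The plan is to exhibit a polynomial-size certificate for a YES-instance of \textsc{\rgphom} restricted to labels in $\{a,a^+\}$, and then argue that this certificate can be verified in polynomial time. The natural certificate is twofold: the vertex-mapping $f:V(D_P)\to V(D_Q)$ itself (which has polynomial size), together with, for each arc $(x,y)$ of $P$, a witnessing walk $W_{xy}$ in $Q$ from $f(x)$ to $f(y)$. The subtlety is that in general the witnessing walks could be exponentially long, so the main point to establish is that short witnesses always suffice in this restricted setting.

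First I would observe that the label semantics here are extremely simple: the only two languages arising on arcs of $P$ are $L(a)=\{a\}$ and $L(a^+)=\{a^i : i\ge 1\}$, and the label of any walk $W$ in $Q$ is again a single power of $a$, namely $a^{\ell}$ where $\ell=|W|$ is the length of the walk. Thus the condition $L(E_Q(W))\subseteq L(E_P(x,y))$ unwinds completely: if $(x,y)$ is labelled ``$a$'', then we need a walk of length exactly $1$ (i.e. an arc) from $f(x)$ to $f(y)$; if $(x,y)$ is labelled ``$a^+$'', then we need \emph{any} walk of length at least $1$, i.e. a directed walk of positive length from $f(x)$ to $f(y)$ in $Q$. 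In particular, for an ``$a^+$''-arc the existence of a witnessing walk is equivalent to the existence of a directed path (of length at least $1$) from $f(x)$ to $f(y)$ in the underlying digraph of $Q$, and such a path, if it exists, can be chosen of length at most $|V(D_Q)|$. This is exactly the content already encoded by the transitive-closure construction $D(Q)$ of Proposition~\ref{prop:a-a+-reduction-to-HOM}, so short witnesses always exist.

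Given this, the NP algorithm is: nondeterministically guess the mapping $f$ and, for each ``$a^+$''-arc of $P$, guess a directed path of length at most $|V(D_Q)|$ in $Q$ witnessing the image pair (for ``$a$''-arcs no guess is needed, we just check for an arc with the right label). All guessed data have total size polynomial in $|P|+|Q|$. Verification then amounts to checking, for each ``$a$''-arc, that the corresponding arc with label ``$a$'' is present in $Q$, and for each ``$a^+$''-arc, that the guessed sequence really is a directed walk from $f(x)$ to $f(y)$ of positive length; both checks are clearly polynomial-time. Alternatively, and perhaps more cleanly, I would simply invoke Proposition~\ref{prop:a-a+-reduction-to-HOM}: it reduces the problem in polynomial time to \HOM{} for two-arc-labeled digraphs, and \HOM{} is in NP (a homomorphism of $D_P'$ to $D(Q)$ is itself a polynomial-size certificate verifiable in polynomial time), while $D(Q)$ is computable in polynomial time; membership in NP follows immediately.

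The step I expect to require the most care is the justification that polynomial-length witnessing walks suffice, i.e. that nothing is lost by restricting ``$a^+$''-arcs to witnesses of bounded length. This is genuinely easy here because the target language $L(a^+)$ is closed under taking \emph{any} positive length, so a shortest witnessing path works; contrast this with the general case of Section~\ref{sec:decid}, where the length bound was only exponential (Proposition~\ref{prop:onewalk}) precisely because general regular languages are not monotone in walk length. I would therefore make this monotonicity observation explicit, since it is the crux distinguishing the NP membership proved here from the EXPTIME bound of Theorem~\ref{thm:RGPHOM-EXPTIME}.
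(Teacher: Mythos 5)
Your proposal is correct, and its second half (invoking Proposition~\ref{prop:a-a+-reduction-to-HOM} together with the fact that \HOM{} is in NP) is exactly the paper's proof, which consists of that one sentence. The first half is a genuinely more self-contained route: you guess the mapping plus short witnessing walks and justify polynomial-length witnesses directly, which has the advantage of making explicit \emph{why} the restriction to $\{a,a^+\}$ collapses the complexity --- the monotonicity of $L(a^+)$ under walk length, in contrast with the exponential bound of Proposition~\ref{prop:onewalk} for general regular expressions. The paper instead buries this observation inside the transitive-closure construction $D(Q)$, which has the advantage of reusing a reduction that is needed anyway for the dichotomy and majority-polymorphism results later in Section~\ref{sec:a-a+}. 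One small imprecision in your direct argument: you assert that the label of any walk $W$ in $Q$ is the single word $a^{|W|}$, but $Q$ itself may carry ``$a^+$'' labels, so $L(E_Q(W))$ is in general the up-set $\{a^i : i\geq |W|\}$ rather than a singleton. This does not affect your conclusions --- an ``$a$''-arc of $P$ still forces a single ``$a$''-labelled arc of $Q$ (any walk touching an ``$a^+$''-arc, or of length other than $1$, yields a language not contained in $\{a\}$), and an ``$a^+$''-arc of $P$ still accepts any positive-length walk --- but the claim as stated should be corrected.
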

\begin{proof}
It suffices to apply Proposition~\ref{prop:a-a+-reduction-to-HOM}, by noting that \HOM{} is in NP.
\end{proof}

\subsection{Undirected RGPs with edge-labels in $\{a,a^+\}$}

We now consider \emph{undirected} RGPs, where arcs are pairs of vertices, called \emph{edges} (equivalently, for each arc from $x$ to $y$, we have its symmetric arc from $y$ to $x$ with the same label) with arc-labels ``$a$'' and ``$a^+$''. In an n-homomorphism of a RGP $P$ to a RGP $Q$ that both satisfy this constraint, two vertices $x$ and $y$ joined by an edge labelled ``$a$'' in $P$ must be mapped to two vertices labelled ``$a$'' in $Q$ (as in a classic graph homomorphism). If $x$ and $y$ are joined by an edge labelled ``$a^+$'' in $P$, they simply need to be mapped to two vertices of $Q$ that are connected by some path in $Q$. Thus, this can be seen as an extension of classic graph homomorphisms with additional (binary) connectivity constraints.

\begin{proposition}\label{prop:a-a+ncore}
  Let $Q$ be an undirected connected n-core RGP over alphabet $\{a\}$ with arc-labels in $\{a,a^+\}$, let $\{S_1,\ldots,S_p\}$ be the set of connected components of the sub-RGP $Q^a$ of $Q$ induced by the edges labeled ``$a$'', and denote by $X_i$ the set of vertices of $S_i$ incident with an edge labeled ``$a^+$''. Then, the following properties hold.
  \begin{itemize}

  \item[(a)] If $Q$ contains no edge labeled ``$a$'', then $Q$ has at most one edge.
  \item[(b)] If $Q$ contains a loop at vertex $v$, then $V(Q)=\{v\}$.
  \item[(c)] If $Q$ is loop-free, then every edge labeled ``$a^+$'' is a bridge.
  \item[(d)] Let $S_i$ and $S_j$ be two distinct components of $Q^a$. If $|X_i|\leq 1$, then $S_i$ has no n-homomorphism to $S_j$.
  \item[(e)] For any component $S_i$ of $Q^a$ and any n-endomorphism $f$ of $S_i$ satisfying that $f(x)=x$ for every vertex $x$ of $X_i$, $f$ is an n-automorphism of $S_i$.
  \end{itemize}
\end{proposition}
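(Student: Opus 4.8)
The plan is to prove each of the five items by contradiction: in every case I exhibit an n-homomorphism from $Q$ to a proper sub-RGP of itself, contradicting the assumption that $Q$ is an n-core. Two elementary facts about n-homomorphisms in this restricted setting are used throughout. First, an edge labelled ``$a$'' can only be mapped to a single edge labelled ``$a$'', since a walk $W$ with $L(E_Q(W))\subseteq\{a\}$ must have length $1$ and use an ``$a$''-edge; whereas an edge labelled ``$a^+$'' can be mapped to \emph{any} ordered pair of vertices joined by a walk of length at least $1$. Second, I record two structural remarks: because each $S_i$ is a connected component of the subgraph \emph{induced by} the ``$a$''-edges, every $S_i$ owns at least one ``$a$''-edge; and because $Q$ is connected with at least two such components, the only edges leaving a component are labelled ``$a^+$'', so each $X_i$ is non-empty. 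In particular the hypothesis $|X_i|\leq 1$ of (d) really means $X_i=\{x_0\}$ for a single vertex $x_0$.

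Items (a)--(c) are then short. For (a), if $Q$ has no ``$a$''-edge but at least two edges, pick one edge $\{u,v\}$ and map every vertex of $Q$ onto $u$; each ``$a^+$''-edge maps to the pair $(u,u)$, joined by the walk $u,v,u$, giving an n-homomorphism onto the proper single-edge sub-RGP on $\{u,v\}$. For (b), if there is a loop at $v$ and $|V(Q)|\geq 2$, connectivity gives $v$ a neighbour $u$; if the loop is labelled ``$a$'' I collapse all of $Q$ onto $v$ (the ``$a$''-loop absorbs both ``$a$''- and ``$a^+$''-constraints), and if it is labelled ``$a^+$'' I delete the loop and use the identity, routing the loop-constraint through the walk $u,v,u$. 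For (c), if a loop-free $Q$ has an ``$a^+$''-edge $e=\{x,y\}$ that is not a bridge, then $x$ and $y$ stay connected in $Q-e$, so the identity is an n-homomorphism of $Q$ onto the proper sub-RGP $Q-e$.

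The main work is item (d), which I expect to be the crux. Assume $Q$ loop-free (otherwise (b) forces a single vertex and (d) is vacuous), and let $g\colon S_i\nto S_j$ be given with $X_i=\{x_0\}$. The natural idea is to fold $S_i$ into $S_j$ via $g$ and keep the identity elsewhere; the delicate point is the choice of target. Deleting all of $V(S_i)$ can disconnect $Q$ (when $x_0$ is a cut-vertex meeting several ``$a^+$''-edges), so the re-routed ``$a^+$''-edges at $x_0$ could become unsatisfiable. The fix is to take as target the sub-RGP $Q'$ obtained by deleting \emph{only} the internal ``$a$''-edges of $S_i$, keeping all vertices and all ``$a^+$''-edges; this is proper precisely because $S_i$ owns an ``$a$''-edge. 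Since $S_i$ is attached to the rest of $Q$ only through $x_0$, the set $\{x_0\}\cup(V(Q)\setminus V(S_i))$ is still connected in $Q'$: any path of $Q$ between two outside vertices that dips into $S_i$ must enter and leave through $x_0$ and can be short-circuited there. Hence each external ``$a^+$''-edge $\{x_0,y\}$ maps under the fold to the pair $(g(x_0),y)$, which lies in this connected set and is thus joined by a walk of length at least $1$ in $Q'$; the internal ``$a$''-edges map via $g$ to ``$a$''-edges of $S_j$, and all other edges map to themselves. This yields $Q\nto Q'$, contradicting the n-core property.

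Finally, for (e) I extend the given map: if $f$ is an n-endomorphism of $S_i$ fixing $X_i$ pointwise, let $F$ be $f$ on $S_i$ and the identity elsewhere. Since $f$ fixes every vertex of $X_i$, each external ``$a^+$''-edge incident to $S_i$ is mapped to itself, the internal ``$a$''-edges are mapped to ``$a$''-edges by $f$, and so $F$ is an n-endomorphism of $Q$. If $f$ were not surjective on $V(S_i)$ then, as $S_i$ is finite and $X_i$ lies in the image, some $v_0\in V(S_i)\setminus X_i$ is missed; being outside $X_i$, $v_0$ is incident only to ``$a$''-edges, so $F$ is in fact an n-homomorphism of $Q$ onto the proper sub-RGP $Q-v_0$, a contradiction. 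Hence $f$ is surjective, and a surjective endomorphism of a finite (``$a$''-labelled) graph is an automorphism, as required. The only genuine subtlety across the whole argument is the target-selection in (d); everywhere else the collapse-or-identity maps work directly.
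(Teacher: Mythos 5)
Your proof is correct and follows the same strategy as the paper's: every item is settled by exhibiting an explicit n-homomorphism of $Q$ onto a proper sub-RGP (the collapse/identity maps for (a)--(c), and the fold ``$g$ on $S_i$, identity elsewhere'' for (d)--(e)). The only difference is one of rigour rather than route: for (d) and (e) the paper merely asserts that the folded map is an endomorphism that is not an automorphism, whereas you correctly pin down the proper sub-RGP target (retaining all vertices and all ``$a^+$''-edges so that the re-routed ``$a^+$''-constraints at $x_0$ stay satisfiable) --- a detail the paper leaves implicit; just note that your premise that every $S_i$ contains an ``$a$''-edge is the edge-induced reading of $Q^a$ under which (d) is actually true (the discussion of Figure~\ref{fig:undirected-n-core} informally counts an isolated vertex as a component, and for such a singleton $S_i$ statement (d) would fail trivially).
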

\begin{proof}

  \noindent (a) If all edges of $Q$ are labelled ''$a^+$'' (and $Q$ has at least one edge), since $Q$ is connected, $Q$ maps to any of its subgraphs containing exactly one edge, a contradiction.\\

  \noindent (b) If there is a loop at vertex $v$ that is labelled ``$a$'', then $Q$ maps to its subgraph induced by $\{v\}$, and so, $V(Q)=\{v\}$. If the loop is labelled ``$a^+$'' but there are at least two vertices in $Q$, then $Q$ maps to its subgraph with that loop removed, contradicting the fact that $Q$ is an n-core.\\

  \noindent (c) If two distinct vertices $x,y$ of $Q$ are joined by an edge labelled ``$a^+$'' that is not a bridge, then $Q$ maps to its subgraph obtained by removing that edge, a contradiction.\\

  \noindent (d) If $S_i\nto S_j$ through a n-homomorphism $f$, we can use $f$ to define an endomorphism $f'$ of $Q$ such that, for any vertex $v$, $f'(v)=f(v)$ if $v\in S_i$ and $f(v)=v$ otherwise. Since $f'$ is not an automorphism, its existence contradicts the fact that $Q$ is an n-core.\\

  \noindent (e) If this was not the case, again we could extend $f$ to obtain an endomorphism of $Q$ that is not an automorphism, a contradiction.  
\end{proof}

An example of an undirected n-core RGP is given in Figure~\ref{fig:undirected-n-core}, where the graph induced by the edges labeled ``$a$'' has three components: a single vertex, a triangle and the Gr\"otzsch graph. The two latter ones are not homomorphic to each other, in accordance with Proposition~\ref{prop:a-a+ncore}(d).

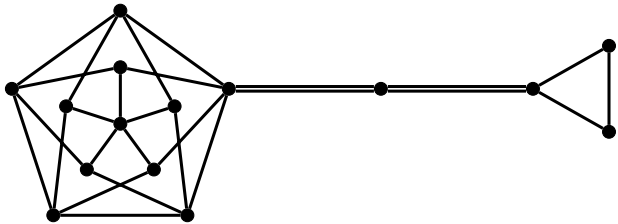
\begin{figure}[!htpb]
  \centering
  \scalebox{1}{\begin{tikzpicture}[join=bevel,inner sep=0.6mm]

\node[draw,shape=circle,fill] (x0) at (18:1.5) {};
\node[draw,shape=circle,fill] (x1) at (90:1.5) {};
\node[draw,shape=circle,fill] (x2) at (162:1.5) {};
\node[draw,shape=circle,fill] (x3) at (234:1.5) {};
\node[draw,shape=circle,fill] (x4) at (306:1.5) {};
\draw[-,line width=0.4mm] (x0)--(x1)--(x2)--(x3)--(x4)--(x0);

\node[draw,shape=circle,fill] (y0) at (18:0.75) {};
\node[draw,shape=circle,fill] (y1) at (90:0.75) {};
\node[draw,shape=circle,fill] (y2) at (162:0.75) {};
\node[draw,shape=circle,fill] (y3) at (234:0.75) {};
\node[draw,shape=circle,fill] (y4) at (306:0.75) {};
\draw[-,line width=0.4mm] (x4)--(y0)--(x1)
                          (x0)--(y1)--(x2)
                          (x1)--(y2)--(x3)
                          (x2)--(y3)--(x4)
                          (x3)--(y4)--(x0);

\node[draw,shape=circle,fill] (0) at (0,0) {};
\draw[-,line width=0.4mm] (y0)--(0)--(y1) (y2)--(0)--(y3)  (y4)--(0);

\node[draw,shape=circle,fill] (p0) at ($(x0)+(2,0)$) {};
\node[draw,shape=circle,fill] (t0) at ($(p0)+(2,0)$) {};
\node[draw,shape=circle,fill] (t1) at ($(t0)+(1,0.57)$) {};
\node[draw,shape=circle,fill] (t2) at ($(t0)+(1,-0.57)$) {};

\draw[-,line width=0.4mm,double] (x0)--(p0)--(t0);

\draw[-,line width=0.4mm] (t0)--(t1)--(t2)--(t0);

  \end{tikzpicture}}
  \caption{An undirected n-core RGP over alphabet $\{a\}$ with edge-labels in $\{a,a^+\}$. Doubled edges are labeled ``$a^+$'', the others are labeled ``$a$''.}
  \label{fig:undirected-n-core}
\end{figure}

\begin{theorem}
Let $Q$ be an undirected and connected n-core RGP over alphabet $\{a\}$ with arc-labels in $\{a,a^+\}$. If $Q$ has at most one edge, \textsc{\rgphom($Q$)} is solvable in polynomial time. Otherwise, \textsc{\rgphom($Q$)} is NP-complete.
\end{theorem}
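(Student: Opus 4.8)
The plan is to establish a dichotomy theorem by treating the easy case and the hard case separately. The easy case is immediate: if $Q$ has at most one edge, then by Proposition~\ref{prop:a-a+ncore}(a) and (b) the RGP $Q$ is either a single vertex, a single loop, or a single edge (labeled ``$a$'' or ``$a^+$''). For each of these small templates, testing $P\nto Q$ reduces to checking simple structural conditions on $P$ (such as whether $P$ is connected and balanced, or whether all its edges can be realized as paths), all of which are decidable in polynomial time. So the bulk of the work lies in the NP-completeness direction.

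For the hard case, membership in NP is already guaranteed by Proposition~\ref{prop:a-a+-NP}, so I only need NP-hardness whenever $Q$ has at least two edges. The natural strategy is to reduce from a known NP-complete \textsc{\HOM($H$)} problem, using the Hell--Ne\v{s}et\v{r}il dichotomy (Theorem~\ref{thm:Hom(H)-dicho}): I would like to exhibit, inside $Q$, a substructure that behaves like a non-bipartite loopless graph $H$, so that $k$-colorability-type hardness transfers. The first reduction target is one of the components $S_i$ of the ``$a$''-subgraph $Q^a$. If some component $S_i$ is itself a non-bipartite loopless graph (with $|X_i|$ small enough that it cannot collapse into another component), then Proposition~\ref{prop:a-a+ncore}(d) and (e) should let me argue that n-homomorphisms of ``$a$''-only inputs to $Q$ are forced to land inside a single rigid component, reducing \textsc{\HOM($S_i$)} to \textsc{\rgphom($Q$)}. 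The key is that the n-core property, via parts (c)--(e) of Proposition~\ref{prop:a-a+ncore}, pins down the structure tightly enough that the ``$a^+$'' edges (which are all bridges by part (c)) cannot provide unexpected extra mappings.

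The main obstacle will be the case analysis needed to cover \emph{every} $Q$ with at least two edges, since such a $Q$ need not contain a non-bipartite component. If all components $S_i$ of $Q^a$ are bipartite (or trivial), then the hardness cannot come from classic \textsc{\HOM($H$)} NP-hardness applied to a single component, and I would instead need to exploit the interaction between components across the bridge ``$a^+$'' edges. My plan is to identify, for such $Q$, a small gadget input $P$ whose n-homomorphisms to $Q$ encode a choice that is NP-hard to make consistent globally---for instance reducing from a constraint-satisfaction flavored problem where the ``$a^+$'' bridges, together with the connectivity requirement, force nontrivial combinatorial selection among the (at least two) edges of $Q$. I expect this mixed case---where the ``$a$''-components are simple but there are several of them joined by bridges---to be the delicate part of the argument, requiring a careful gadget construction that leverages parts (c) and (d) of Proposition~\ref{prop:a-a+ncore} to guarantee that the encoded instance has a solution if and only if the constructed $P$ maps to $Q$.
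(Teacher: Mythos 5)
Your easy case matches the paper's (a short case analysis over the at-most-five possible one-edge templates), and your overall strategy for the hard case --- NP membership via Proposition~\ref{prop:a-a+-NP} plus a reduction from a Hell--Ne\v{s}et\v{r}il-hard \textsc{\HOM($H$)} --- is the right one. But there is a genuine gap in how you execute it. You treat the situation where every component of $Q^a$ is bipartite as a live case requiring a separate, ``delicate'' gadget construction exploiting the $a^+$ bridges, and you never actually supply that construction; you only announce the intention to find one. The missing idea is that this case is \emph{vacuous}: Proposition~\ref{prop:a-a+ncore} already forces $Q^a$ to contain a non-bipartite component whenever the connected n-core $Q$ has at least two edges. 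Concretely: by (a) and (b), $Q$ has an ``$a$''-edge and no loops; by (c) the $a^+$ edges are bridges, so contracting the components of $Q^a$ yields a tree; a leaf component $S_i$ of that tree has $|X_i|\leq 1$, so by (d) it admits no n-homomorphism to any other component --- yet if $S_i$ were bipartite (and every component must contain an edge, since a single-vertex component would map into any other, again contradicting (d)), it would fold onto a single ``$a$''-edge of a neighbouring component. If instead $Q^a$ is a single component, (c) forces $Q=Q^a$, and a bipartite core with at least two edges does not exist. Without this observation your proof is incomplete precisely on the case you yourself flag as the hard one.

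A secondary, fixable inefficiency: you aim to reduce from \textsc{\HOM($S_i$)} for one distinguished component and then argue that images are ``forced to land inside'' $S_i$, which requires extra rigidity arguments via (d)--(e). The paper avoids this entirely by reducing from \textsc{\HOM($H$)} where $H$ is \emph{all} of $Q^a$ with labels erased: since $H$ is loopless and has a non-bipartite component, \textsc{\HOM($H$)} is NP-complete by Theorem~\ref{thm:Hom(H)-dicho}, and an input $G$ with every edge labelled ``$a$'' maps to $Q$ if and only if $G\to H$, because ``$a$''-labelled edges can only be matched by ``$a$''-labelled edges of $Q$. No confinement to a single component is needed.
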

\begin{proof}  
  \noindent\emph{Polynomial-time part.} Suppose that $Q$ has at most one edge. Then, it either consists of a single vertex with at most one loop, or of two vertices joined by a unique edge. Let $P$ be an input RGP over alphabet $\{a\}$ and with arc-labels in $\{a,a^+\}$. If $Q$ has one vertex and no loop, $P\nto Q$ if and only if $P$ has no edge. If $Q$ has one vertex and a loop labelled ``$a$'', always $P\nto Q$. If $Q$ has one vertex and a loop labelled ``$a^+$'', $P\nto Q$ if and only if $P$ has no edge labelled ``$a$''. If $Q$ has two vertices and an edge labelled ''$a^+$'', also if and only if $P$ has no edge labelled ``$a$''. Finally, if $Q$ has two vertices and an edge labelled ''$a$'', $P\nto Q$ if and only if the subgraph of $P$ induced by the edges labelled ``$a$'' is bipartite. All these conditions can be checked in polynomial time.\\

  \noindent\emph{NP-complete part.} Suppose now that $Q$ has at least two edges. \textsc{\rgphom($Q$)} is in NP by Corollary~\ref{prop:a-a+-NP}. We reduce from \textsc{\HOM($H$)}, where $H$ is the undirected graph obtained from $Q^a$ (the sub-RGP of $Q$ induced by the edges labelled ``$a$'') by removing all edge-labels. By Proposition~\ref{prop:a-a+ncore}(a)--(b), $H$ has no loop and at least one edge. By Proposition~\ref{prop:a-a+ncore}(c)--(e), $H$ has a connected component with an odd cycle, thus by Theorem~\ref{thm:Hom(H)-dicho}, \textsc{\HOM($H$)} is NP-complete. We trivially reduce \textsc{\HOM($H$)} to \textsc{\rgphom($Q$)} as follows: for an input undirected graph $G$ of \textsc{\HOM($H$)}, construct the undirected RGP $P(G)$ from $G$ by labelling each edge with ``$a$''. Now, $G\to H$ if and only if $P(G)\nto Q$.  
\end{proof}

\subsection{Directed path RGPs with all arcs labeled ``$a$''}

In this section, we consider \textsc{\rgphom($Q$)} when $Q$ is a directed path whose arc labels are all ``$a$'' (arguably the simplest RGP directed graph example) and where instances can have labels in $\{a,a^+\}$. This case turns out to have an interesting connection to the following parallel job scheduling problem, studied in the book~\cite[Chapter 4.4, p.666]{algoBook}.

\decisionpb{\textsc{Parallel Job Scheduling With Relative Deadlines}}{A set $J$ of jobs, a duration function $d:J\to\mathbb{N}$, a relative deadline function $r:J\times J\to\mathbb{Z}$, and a maximum time $t_{max}$.}{Is there a feasible schedule for the jobs, that is, an assignment $t:J\to \mathbb N$ of start times such that every job finishes before time $t_{max}$ and for any two jobs $j_1$ and $j_2$, $j_1$ starts before the time $t(j_2)+r(j_1,j_2)$?}{0.9}

\textsc{Parallel Job Scheduling With Relative Deadlines} can be solved in polynomial time by a reduction to a shortest path problem in an edge-weighted digraph, see~\cite[Chapter 4.4, p.666]{algoBook}. We will now show that when $Q$ is a directed path whose arc labels are in $\{a,a^+\}$, \textsc{\rgphom($Q$)} can be modeled as such a scheduling problem.

We sau that a digraph is \emph{balanced} if it is acyclic and its vertices can be partitioned into \emph{levels} $L_1,\ldots,L_k$ such that for each arc $(x,y)$ with $x\in L_i$, we have $y\in L_{i+1}$. A digraph is balanced if and only if it has a homomorphism to a directed path (the length of this path must be at least the number of levels minus one).

\begin{theorem}\label{thm:directed-paths-a-a+}
Let $Q$ be a RGP whose underlying digraph is a directed path, and whose arc labels are all ``$a$''. Then, \textsc{\rgphom($Q$)} for instances with arc-labels in $\{a,a^+\}$ can be reduced in polynomial time to \textsc{Parallel Job Scheduling With Relative Deadlines}.
\end{theorem}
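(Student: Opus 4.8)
The plan is to encode an n-homomorphism of the input RGP $P$ into the directed path $Q$ as an integer assignment of ``levels'' to the vertices of $P$, to read off the homomorphism conditions as difference constraints on these levels, and then to match those constraints one-for-one with the constraints of \textsc{Parallel Job Scheduling With Relative Deadlines}. First I would fix notation for $Q$: write its vertices as $v_0,v_1,\dots,v_m$ in path order, so that $Q$ has $m+1$ vertices, $m$ arcs, and each arc $(v_{i-1},v_i)$ is labelled ``$a$''. Because $Q$ is a directed path, a directed walk from $v_i$ to $v_j$ exists only when $j\ge i$, in which case it is unique and has label $a^{\,j-i}$; hence $L(E_Q(W))=\{a^{\,j-i}\}$. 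Identifying $v_i$ with its index $i$ and writing $t(x)$ for the index of $f(x)$, the definition of an n-homomorphism translates, for each arc $(x,y)$ of $P$, into: if $(x,y)$ is labelled ``$a$'' then $\{a^{\,t(y)-t(x)}\}\subseteq L(a)$, i.e.\ $t(y)=t(x)+1$; and if $(x,y)$ is labelled ``$a^+$'' then $\{a^{\,t(y)-t(x)}\}\subseteq L(a^+)$, i.e.\ $t(y)\ge t(x)+1$. (The trivial walk, with label $\varepsilon$, is useless in both cases, since $\varepsilon$ belongs to neither $L(a)$ nor $L(a^+)$.) Consequently $P\nto Q$ holds if and only if there is a map $t\colon V(D_P)\to\{0,1,\dots,m\}$ satisfying these conditions.

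Next I would observe that every such condition, as well as the range $0\le t(x)\le m$, is a difference constraint $t(u)-t(w)\le c$ with $c\in\mathbb{Z}$. Indeed, an ``$a$''-arc $(x,y)$ is equivalent to the pair $t(x)-t(y)\le -1$ and $t(y)-t(x)\le 1$; an ``$a^+$''-arc $(x,y)$ is the single constraint $t(x)-t(y)\le -1$; the lower bound is $t(x)\ge 0$; and the upper bound is $t(x)\le m$. This is exactly the form of the constraints in \textsc{Parallel Job Scheduling With Relative Deadlines}, whose feasibility asks for a map $t\colon J\to\mathbb{N}$ with $t(j_1)\le t(j_2)+r(j_1,j_2)$ for all pairs of jobs and $t(j)+d(j)\le t_{max}$ for every job $j$.

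The reduction can then be read off directly. I would take the job set $J:=V(D_P)$, set all durations to $0$ and $t_{max}:=m$; then $t\colon J\to\mathbb{N}$ enforces $t(x)\ge 0$ and the deadline $t_{max}$ enforces $t(x)\le m$, so the admissible start times are precisely the levels $\{0,\dots,m\}$. For the relative deadlines, for each ordered pair $(u,w)$ I would let $r(u,w)$ be the minimum of all upper bounds on $t(u)-t(w)$ imposed by the arcs of $P$: each arc $(x,y)$ labelled ``$a$'' or ``$a^+$'' contributes the bound $-1$ to $r(x,y)$, and each arc $(x,y)$ labelled ``$a$'' additionally contributes the bound $1$ to $r(y,x)$. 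For any pair constrained by no arc I would set $r(u,w):=m$, which is vacuous since $t(u)-t(w)\le m$ always holds on $\{0,\dots,m\}$. By construction a feasible schedule is literally the same object as a level map $t$ witnessing $P\nto Q$, so the scheduling instance is feasible if and only if $P\nto Q$. The instance has $|V(D_P)|$ jobs and an $r$-table with $|V(D_P)|^2$ entries, each computed by scanning the arcs of $P$, so the whole construction is polynomial.

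The only genuinely delicate points are the walk analysis — checking that on a directed path the only available walk labels are the single words $a^{\,j-i}$ for $j\ge i$, which is precisely what forces an ``$a$''-arc into an equality and an ``$a^+$''-arc into a strict increase — and the bookkeeping of encoding the equality as two inequalities and the finite range $\{0,\dots,m\}$ via $t_{max}$ together with $t\colon J\to\mathbb{N}$. I would also remark that if ``$j_1$ starts before $t(j_2)+r(j_1,j_2)$'' (and ``finishes before $t_{max}$'') is read as a strict inequality, then each constant above, including $t_{max}$, is merely shifted by one, so correctness is unaffected.
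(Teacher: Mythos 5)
Your proof is correct, but it reaches the scheduling problem by a genuinely different route than the paper. The paper first normalizes the input $P$ in two steps: it contracts, within each connected component of the ``$a$''-labelled subdigraph $P^A$, all vertices on the same level (they are forced to the same image), so that each component becomes a directed path, and it then discards all but one ``$a^+$''-arc between any two components (keeping the binding one). Only then does it build the scheduling instance, with one \emph{job per component} of $P''^A$, duration equal to the component's path length, and relative deadlines coming from the surviving ``$a^+$''-arcs. You instead skip the preprocessing entirely: one \emph{job per vertex}, all durations zero, and every arc of $P$ (an ``$a$''-arc as the pair of inequalities $t(y)-t(x)\le 1$ and $t(x)-t(y)\le -1$, an ``$a^+$''-arc as the single inequality $t(x)-t(y)\le -1$) translated verbatim into a relative deadline, with $r(u,w)=m$ for unconstrained pairs. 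Both reductions are polynomial and both rest on the same key observation, which you state and justify cleanly: on a directed path the unique walk from $v_i$ to $v_j$ has label $a^{\,j-i}$, so an ``$a$''-arc forces $t(y)=t(x)+1$ and an ``$a^+$''-arc forces $t(y)\ge t(x)+1$ (and the empty walk is useless since $\varepsilon$ lies in neither language). What the paper's version buys is a scheduling instance with ``genuine'' jobs of positive duration and, along the way, some structural insight (the instance must be balanced, redundant constraints can be pruned); what yours buys is uniformity and transparency --- it exposes \textsc{\rgphom($Q$)} for path templates as a bare difference-constraint feasibility system, which is exactly the form the scheduling problem is solved in. The only cosmetic caveat is that if one insists that durations lie in $\mathbb{N}\setminus\{0\}$, you should set $d\equiv 1$ and $t_{max}=m+1$; this does not affect correctness.
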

\begin{proof}
Let $q_0,\ldots,q_{n-1}$ be the vertices of $Q$, with the arc $(q_i,q_{i+1})$ whenever $0\leq i\leq n-2$. Let $P$ be an instance of \textsc{\rgphom($Q$)}. It is clear that the underlying digraph of $P$ must be acyclic, otherwise $P$ is a NO instance. Now, consider the digraphs $P^A$ and $Q^A$ which are the subdigaphs of the underlying digraphs of $P$ and $Q$ respectively, that are induced by the arcs labeled ``$a$''. In order for $P$ to be a YES-instance, we must necessarily have that $P^A$ has a homomorphism to $Q^A$, that is, $P^A$ must be a balanced digraph, with at most as many levels as the length of $Q$.

  We will now transform $P$ into an equivalent instance $P'$. Observe that in an n-homomorphism of $P$ to $Q$, necessarily all vertices of $P$ that belong to a same connected component of $P^A$ and to the same level in $P^A$ will be mapped to the same vertex of $Q$. Thus, we may identify all such vertices of $P$ into a single vertex to obtain $P'$. In $P'$, all connected components of $P'^A$ are directed paths. Now, observe that if there is some arc labeled ``$a^+$'' within some connected component of $P'^A$, either it creates a directed cycle (then $P$ is a NO instance and we return NO), or it is not useful since this arc can be mapped trivially. Thus, we may assume that there is no such arc in $P'$. See Figure~\ref{fig:path-proof1} for an illustration of this process.

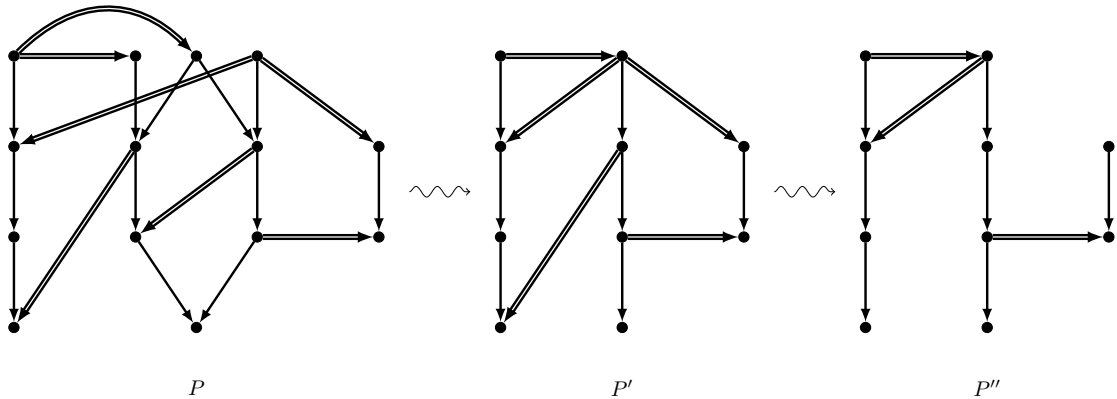
\begin{figure}[!htpb]
  \centering
  \scalebox{0.8}{\begin{tikzpicture}[join=bevel,inner sep=0.6mm]
      \begin{scope}
        \node[draw,shape=circle,fill] (x0) at (0,0) {};
        \draw (x0)+(0,1.5) node[draw,shape=circle,fill] (x1) {};
        \draw (x1)+(0,1.5) node[draw,shape=circle,fill] (x2) {};
        \draw (x2)+(0,1.5) node[draw,shape=circle,fill] (x3) {};
        
        \node[draw,shape=circle,fill] (y0) at (3,0) {};
        \draw (y0)+(-1,1.5) node[draw,shape=circle,fill] (y1) {};
        \draw (y1)+(0,1.5) node[draw,shape=circle,fill] (y2) {};
        \draw (y2)+(0,1.5) node[draw,shape=circle,fill] (y3) {};
        \draw (y0)+(1,1.5) node[draw,shape=circle,fill] (y4) {};
        \draw (y4)+(0,1.5) node[draw,shape=circle,fill] (y5) {};
        \draw (y5)+(-1,1.5) node[draw,shape=circle,fill] (y6) {};
        \draw (y5)+(0,1.5) node[draw,shape=circle,fill] (y7) {};
        
        \node[draw,shape=circle,fill] (z0) at (6,1.5) {};
        \draw (z0)+(0,1.5) node[draw,shape=circle,fill] (z1) {};     

        \node (Q) at (3,-1) {$P$};      
        
        \draw[->,>=latex, double ,line width=0.4mm] (y7)--(x2);
        \draw[->,>=latex, line width=0.4mm] (x3)--(x2);
        \draw[->,>=latex, line width=0.4mm] (x2)--(x1);
        \draw[->,>=latex, line width=0.4mm] (x1)--(x0);
        \draw[->,>=latex, line width=0.4mm] (y2)--(y1);
        \draw[->,>=latex, line width=0.4mm] (y1)--(y0);
        \draw[->,>=latex, line width=0.4mm] (y3)--(y2);
        \draw[->,>=latex, line width=0.4mm] (y6)--(y2);
        \draw[->,>=latex, line width=0.4mm] (y6)--(y5);
        \draw[->,>=latex, line width=0.4mm] (y7)--(y5);
        \draw[->,>=latex, line width=0.4mm] (y5)--(y4);
        \draw[->,>=latex, line width=0.4mm] (y4)--(y0);
        \draw[->,>=latex, line width=0.4mm] (z1)--(z0);
        \draw[->,>=latex, double ,line width=0.4mm] (y2)--(x0);
        \draw[->,>=latex, double ,line width=0.4mm] (x3)--(y3);
        \draw[->,>=latex, double ,line width=0.4mm] (x3) .. controls +(1,1) and +(-1,1) .. (y6);
        \draw[->,>=latex, double ,line width=0.4mm] (y5)--(y1);
        \draw[->,>=latex, double ,line width=0.4mm] (y7)--(z1);
        \draw[->,>=latex, double ,line width=0.4mm] (y4)--(z0);

        \draw[decorate, decoration={snake},->] (6.5,2.25) -- (7.5,2.25);
      \end{scope}

      \begin{scope}[xshift=8cm]      
        \node[draw,shape=circle,fill] (x0) at (0,0) {};
        \draw (x0)+(0,1.5) node[draw,shape=circle,fill] (x1) {};
        \draw (x1)+(0,1.5) node[draw,shape=circle,fill] (x2) {};
        \draw (x2)+(0,1.5) node[draw,shape=circle,fill] (x3) {};

        \node[draw,shape=circle,fill] (y0) at (2,0) {};
        \draw (y0)+(0,1.5) node[draw,shape=circle,fill] (y1) {};
        \draw (y1)+(0,1.5) node[draw,shape=circle,fill] (y2) {};
        \draw (y2)+(0,1.5) node[draw,shape=circle,fill] (y3) {};
                
        \node[draw,shape=circle,fill] (z0) at (4,1.5) {};
        \draw (z0)+(0,1.5) node[draw,shape=circle,fill] (z1) {};     

        \node (Q) at (2,-1) {$P'$};      
        
        \draw[->,>=latex, line width=0.4mm] (x3)--(x2);
        \draw[->,>=latex, line width=0.4mm] (x2)--(x1);
        \draw[->,>=latex, line width=0.4mm] (x1)--(x0);
        \draw[->,>=latex, line width=0.4mm] (y3)--(y2);
        \draw[->,>=latex, line width=0.4mm] (y2)--(y1);
        \draw[->,>=latex, line width=0.4mm] (y1)--(y0);
        \draw[->,>=latex, line width=0.4mm] (z1)--(z0);
        \draw[->,>=latex, double ,line width=0.4mm] (y2)--(x0);
        \draw[->,>=latex, double ,line width=0.4mm] (x3)--(y3);
        \draw[->,>=latex, double ,line width=0.4mm] (y3)--(z1);
        \draw[->,>=latex, double ,line width=0.4mm] (y1)--(z0);
        \draw[->,>=latex, double ,line width=0.4mm] (y3)--(x2);

        \draw[decorate, decoration={snake},->] (4.5,2.25) -- (5.5,2.25);
      \end{scope}

      \begin{scope}[xshift=14cm]      
        \node[draw,shape=circle,fill] (x0) at (0,0) {};
        \draw (x0)+(0,1.5) node[draw,shape=circle,fill] (x1) {};
        \draw (x1)+(0,1.5) node[draw,shape=circle,fill] (x2) {};
        \draw (x2)+(0,1.5) node[draw,shape=circle,fill] (x3) {};

        \node[draw,shape=circle,fill] (y0) at (2,0) {};
        \draw (y0)+(0,1.5) node[draw,shape=circle,fill] (y1) {};
        \draw (y1)+(0,1.5) node[draw,shape=circle,fill] (y2) {};
        \draw (y2)+(0,1.5) node[draw,shape=circle,fill] (y3) {};
                
        \node[draw,shape=circle,fill] (z0) at (4,1.5) {};
        \draw (z0)+(0,1.5) node[draw,shape=circle,fill] (z1) {};     

        \node (Q) at (2,-1) {$P''$};      
        
        \draw[->,>=latex, line width=0.4mm] (x3)--(x2);
        \draw[->,>=latex, line width=0.4mm] (x2)--(x1);
        \draw[->,>=latex, line width=0.4mm] (x1)--(x0);
        \draw[->,>=latex, line width=0.4mm] (y3)--(y2);
        \draw[->,>=latex, line width=0.4mm] (y2)--(y1);
        \draw[->,>=latex, line width=0.4mm] (y1)--(y0);
        \draw[->,>=latex, line width=0.4mm] (z1)--(z0);
        \draw[->,>=latex, double ,line width=0.4mm] (x3)--(y3);
        \draw[->,>=latex, double ,line width=0.4mm] (y1)--(z0);
        \draw[->,>=latex, double ,line width=0.4mm] (y3)--(x2);
      \end{scope}      
    \end{tikzpicture}}
      \caption{Proof of Theorem~\ref{thm:directed-paths-a-a+}: illustration of the two-step simplification of an input RGP $P$ into $P'$ and $P''$, with three components in $P^A$. Doubled arcs are labeled ``$a^+$'', the others are labeled ``$a$''.}
  \label{fig:path-proof1}
\end{figure}

  We now perform a second simplification and transform $P'$ into yet another equivalent RGP $P''$. Consider two components $C_1$ and $C_2$ of $P'^A$. If there are two ``$a^+$''-labeled arcs from $C_1$ to $C_2$, then one of them can be removed to obtain an equivalent instance. Indeed, assume that the vertices of $C_1$ are $u_1,u_2,\ldots,u_{|C_1|}$ and those of $C_2$ are $v_1,v_2,\ldots,v_{|C_2|}$ (where $(u_i,u_{i+1})$ and $(v_i,v_{i+1})$ are ``$a$''-labeled arcs for $i$ less than $|C_1|$ and $|C_2|$, respectively). An ``$a^+$''-labeled arc $(u_i,v_j)$ implies that in any homomrophism of $P'$ to $Q$, if the vertex $u_0$ is mapped to the vertex $q_a$, then $v_0$ must be mapped to a vertex $q_b$ with $b\geq a+i-j+1$. Thus, among all such arcs $(u_i,v_j)$, in $P'$ we remove all of them but the one that maximizes $i-j$ and obtain the equivalent instance $P''$. See again Figure~\ref{fig:path-proof1} for an illustration.

We can now build the equivalent instance of \textsc{Parallel Job Scheduling With Relative Deadlines}. We let $J$ be the set of connected components of $P''^A$. We let $t_{max}=n-1$. For a connected component $C$ of $P''^A$, $d(C)$ (recall that $C$ is a directed path) is the number of arcs of $C$. For two components $C_1$ and $C_2$ with vertices $u_1,u_2,\ldots,u_{|C_1|}$ and $v_1,v_2,\ldots,v_{|C_2|}$ numbered as before, if there is an ``$a^+$''-labeled arc $(u_i,v_j)$ from $C_1$ to $C_2$ in $P''$ (recall there is at most one such arc), we let $r(C_1,C_2)=j-i-1$ (the job $C_1$ must start at least $j-i-1$ time units before the job $C_2$). If there is no arc from $C_1$ to $C_2$, then we let $r(C_1,C_2)=-n$, which imposes no time contraint.

It is now clear that any solution $s$ to this instance of \textsc{Parallel Job Scheduling With Relative Deadlines} corresponds to an n-homomorphism $f_s$ of $P''$ to $Q$ (and thus of $P$ to $Q$), and vice-versa, where the start time of a job $C$ in $s$ is equal to the index of the image by $f_s$ of the source of $C$. 
\end{proof}

We obtain the following immediate corollary.

\begin{corollary}\label{cor:directed-paths-a-a+}
If $Q$ is an RGP whose underlying digraph is a directed path with all arcs labeled ``$a$'', \textsc{\rgphom($Q$)} can be solved in polynomial time for instances whose arc labels are in $\{a,a^+\}$.
\end{corollary}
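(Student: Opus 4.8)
The plan is to obtain this statement as a direct composition of two polynomial-time procedures, namely the reduction established in Theorem~\ref{thm:directed-paths-a-a+} together with the known polynomial-time algorithm for \textsc{Parallel Job Scheduling With Relative Deadlines}. First I would invoke Theorem~\ref{thm:directed-paths-a-a+}: since $Q$ is exactly an RGP whose underlying digraph is a directed path with all arcs labeled ``$a$'', that theorem gives, for any input RGP $P$ whose arc-labels lie in $\{a,a^+\}$, a polynomial-time computable instance of \textsc{Parallel Job Scheduling With Relative Deadlines} that is a YES-instance if and only if $P\nto Q$.

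Second, I would appeal to the fact recalled just before Theorem~\ref{thm:directed-paths-a-a+} that \textsc{Parallel Job Scheduling With Relative Deadlines} is solvable in polynomial time (via its reduction to a shortest-path computation in an edge-weighted digraph, see~\cite[Chapter 4.4, p.666]{algoBook}). Running this algorithm on the scheduling instance produced in the first step therefore decides $P\nto Q$. Because the composition of a polynomial-time reduction with a polynomial-time decision procedure is itself polynomial time, the whole procedure runs in polynomial time, which is what the corollary asserts.

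The only point that genuinely needs checking, and where I would direct a little attention, is that the reduction of Theorem~\ref{thm:directed-paths-a-a+} is not merely polynomial-time in the sense of running time but also produces an instance of polynomial \emph{size}; otherwise feeding it into the scheduling solver could inflate the overall cost. This is already guaranteed, however, since the two simplification steps of that proof (identifying vertices within a common level of a connected component of $P^A$, and deleting all but one ``$a^+$''-arc between each ordered pair of components) only shrink the instance, and the resulting job set $J$, durations $d$, relative deadlines $r$, and bound $t_{max}$ are all bounded by $|P|+|Q|$. Hence no blow-up occurs. I expect no real obstacle here, as all the substantive work was carried out in Theorem~\ref{thm:directed-paths-a-a+}; the corollary is immediate once that reduction and the polynomiality of the scheduling problem are in place.
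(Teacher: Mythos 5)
Your proposal is correct and matches the paper's intent exactly: the paper states this as an ``immediate corollary'' of Theorem~\ref{thm:directed-paths-a-a+}, relying precisely on the composition of that polynomial-time reduction with the polynomial-time algorithm for \textsc{Parallel Job Scheduling With Relative Deadlines}. Your additional remark that the reduction produces a polynomial-size instance is a sensible sanity check, but the argument is the same as the paper's.
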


As we will see in the next section, Corollary~\ref{cor:directed-paths-a-a+} can be generalized to all RGPs whose underlying digraph is a directed path and whose arc labels are in $\{a,a^+\}$.

\subsection{Directed path RGPs with arc-labels in $\{a,a^+\}$}

Our next result is more general than Corollary~\ref{cor:directed-paths-a-a+}, as we use a stronger method. It also extends a result from~\cite{MS04}, where the statement is proved only for input RGPs whose underlying digraphs are directed trees. Here we prove it for all kinds of inputs.

But first, we need to define some notions that are useful tools for proving alorithmic results for homomorphism problems. For an arc-labeled digraph $D$ and a positive integer $k$, we define the \emph{product digraph} $D^k$ as the digraph on vertex set $V(D)^k$, with an arc of label $\ell$ from $(x_1,\ldots,x_k)$ to $(y_1,\ldots,y_k)$ if all pairs $(x_i,y_i)$ with $1\leq i\leq k$ are arcs of label $\ell$ in $D$.

A homomorphism of $D^k$ to $D$ is called a ($k$-ary) \emph{polymorphism} of $D$. For a set $S$, a function $f$ from $S^3$ to $S$ is a \emph{majority function} if for all $x,y$ in $S$, $f(x,x,y)=f(x,y,x)=f(y,x,x)=x$. We have the following theorem, that can be found in \cite[Theorem 5.2.4]{majority}.

\begin{theorem}[\cite{majority}]\label{thm:majority}
Let $D$ be an arc-labeled digraph that has a ternary polymorphism that is a majority function. Then, \HOM($D$) is polynomial-time solvable.
\end{theorem}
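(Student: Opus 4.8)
The plan is to show that \HOM($D$) is decided by a polynomial-time local-consistency (constraint-propagation) algorithm whose correctness is forced by the majority polymorphism. I would first phrase \HOM($D$) as a constraint satisfaction problem: the variables are the vertices of the input digraph $G$, the domain is $V(D)$, and for every arc of $G$ with label $\ell$ there is a binary constraint given by the label-$\ell$ arc relation of $D$. The algorithm maintains, for every ordered pair $(x,y)$ of variables, a list $P_{x,y}\subseteq V(D)\times V(D)$ of still-allowed value pairs, initialized from the arc relations (and from $V(D)\times V(D)$ for non-adjacent pairs), and repeatedly enforces $(2,3)$-consistency: a pair $(a,b)\in P_{x,y}$ is deleted unless for every third variable $z$ there is a value $c$ with $(a,c)\in P_{x,z}$ and $(c,b)\in P_{z,y}$. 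This is a monotone shrinking process on $O(|V(G)|^2)$ lists of size at most $|V(D)|^2$, so it reaches a fixpoint in polynomial time; the algorithm answers \textsc{yes} exactly when no list becomes empty.

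The mathematical core is a $2$-decomposability property of $m$-invariant relations, i.e. the Baker--Pixley theorem specialised to the ternary near-unanimity (majority) case. I would prove that any relation $R\subseteq V(D)^n$ invariant under the majority polymorphism $m$ is determined by its binary projections: a tuple $\bar t=(t_1,\dots,t_n)$ lies in $R$ whenever every projected pair $(t_i,t_j)$ lies in $\pi_{i,j}(R)$. The heart of this argument is that if, for each of the coordinates $1,2,3$, one has a tuple $\bar u^{(j)}\in R$ agreeing with $\bar t$ on every coordinate except possibly $j$, then $m(\bar u^{(1)},\bar u^{(2)},\bar u^{(3)})$ reconstructs $\bar t$ coordinatewise: on coordinates $\geq 4$ all three agree with $\bar t$, and on each of $1,2,3$ two of the three already equal the corresponding $t_i$, so the majority axiom $m(x,x,y)=m(x,y,x)=m(y,x,x)=x$ returns $t_i$; hence $\bar t\in R$ by closure under $m$. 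Producing these ``agree-off-one-coordinate'' tuples is carried out by induction on $n$ from the pairwise-compatibility hypothesis.

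To deduce correctness, suppose the propagation has reached a fixpoint with all lists nonempty. First observe that the computed lists $P_{x,y}$ remain invariant under $m$, because they are obtained from the (trivially $m$-invariant) arc relations of $D$ by intersections and relational compositions, both of which preserve invariance under $m$. Consider now the $n$-ary relation $S$ of \emph{pairwise-consistent assignments}, namely the tuples $(v_1,\dots,v_n)\in V(D)^n$ with $(v_i,v_j)\in P_{x_i,x_j}$ for all $i,j$; as an intersection of $m$-invariant relations it is itself $m$-invariant, hence $2$-decomposable by the lemma above. Since every arc relation of $D$ is binary, any member of $S$ automatically satisfies all the original arc constraints of $G$, so it is exactly a homomorphism $G\to D$. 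It remains to show $S\neq\emptyset$, which I would obtain by an induction extending a pairwise-consistent assignment on $\{x_1,\dots,x_m\}$ to one more variable; $(2,3)$-consistency supplies the base cases on pairs and triples, and the majority operation combines the partial extensions exactly as in the decomposability lemma. Conversely, an empty list certifies that no homomorphism exists.

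The step I expect to be the main obstacle is the decomposability lemma together with its precise interface with the chosen consistency level. One has to select the right notion --- $(2,3)$-consistency (strong $3$-consistency) rather than mere arc consistency --- and verify carefully that the fixpoint relations $P_{x,y}$ are genuinely $m$-invariant and that the relation $S$ to which Baker--Pixley is applied has binary projections matching the propagated lists, so that nonemptiness of the lists forces $S\neq\emptyset$. The inductive construction of the agree-off-one-coordinate tuples and the bookkeeping ensuring the one-variable extension never gets stuck are the delicate parts; soundness of the ``empty list'' answer and the polynomial running-time bound are routine.
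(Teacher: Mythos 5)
The paper does not prove this statement: it is imported verbatim from \cite{majority} (Theorem 5.2.4 there), so there is no in-paper argument to compare yours against. Your plan --- enforce $(2,3)$-consistency, check that the propagated binary lists remain invariant under the majority operation $m$, and invoke Baker--Pixley $2$-decomposability of $m$-invariant relations --- is the standard route to this classical result and matches the cited source in spirit; the decomposability lemma itself, the $m$-invariance of the fixpoint lists, soundness of rejection on an empty list, and the running-time bound are all fine as you describe them.

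There is, however, a genuine gap at exactly the step you flag as delicate, and it is worth naming because the obvious way to carry it out fails. In the decomposability lemma the three tuples $\bar u^{(1)},\bar u^{(2)},\bar u^{(3)}$ all \emph{belong to} the $m$-invariant relation $R$, so $m(\bar u^{(1)},\bar u^{(2)},\bar u^{(3)})\in R$ is immediate from closure. In the one-variable extension step this is not available: if $g$ is a pairwise-consistent assignment on $W=\{x_1,\dots,x_j\}$ and, for $i=1,2,3$, the value $d_i$ extends $g$ restricted to $W\setminus\{x_i\}$ to the new variable $y$, then for the pair $(x_1,y)$ you only know that \emph{two} of the three pairs $(g(x_1),d_1)$, $(g(x_1),d_2)$, $(g(x_1),d_3)$ lie in $P_{x_1,y}$, and two out of three arguments lying in an $m$-closed set does not force the majority image into that set. (Take $m$ to be the dual discriminator on $\{1,2,3\}$, under which every subset is closed: the sets $\{1,2\},\{2,3\},\{1,3\}$ are pairwise intersecting with empty intersection, and $m(3,1,2)=3\notin\{1,2\}$ even though the last two arguments lie in $\{1,2\}$.) So ``the majority operation combines the partial extensions exactly as in the decomposability lemma'' is not literally true, and the extension induction does not close as written. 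The repair --- which is where the real content of the theorem lies --- is a stronger simultaneous induction: one proves together with the extension property that the binary projections of the relation of pairwise-consistent assignments on every variable set equal the propagated lists $P_{u,v}$, and one uses the full strength of $(2,3)$-consistency (how $P_{x_i,y}$, $P_{x_{i'},y}$ and $P_{x_i,x_{i'}}$ interact), not merely the pairwise nonemptiness of the candidate-value sets. Until that step is supplied, the argument establishes only soundness, not completeness, of the consistency algorithm.
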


\begin{theorem}\label{thm:a-a+-paths}
  Let $Q$ be an RGP whose underlying digraph is a directed path with arc labels either ``$a$'' or ``$a^+$''. Then, \textsc{\rgphom($Q$)} can be solved in polynomial time.
\end{theorem}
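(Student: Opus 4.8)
The plan is to combine the reduction of Proposition~\ref{prop:a-a+-reduction-to-HOM} with the majority-polymorphism criterion of Theorem~\ref{thm:majority}. Since $Q$ is fixed, Proposition~\ref{prop:a-a+-reduction-to-HOM} reduces \textsc{\rgphom($Q$)} (over inputs with arc-labels in $\{a,a^+\}$) in polynomial time to $\HOM(D(Q))$, where $D(Q)$ is the fixed two-arc-labeled digraph built from $Q$. So it suffices to prove that $\HOM(D(Q))$ is polynomial-time solvable, and for this I would exhibit a ternary majority polymorphism of $D(Q)$ and then invoke Theorem~\ref{thm:majority}.

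The first step is to record the structure of $D(Q)$. Writing $q_0,q_1,\ldots,q_{n-1}$ for the vertices of the directed path $Q$ in path order, the arcs labeled ``$t$'' of $D(Q)$ are exactly the pairs $(q_i,q_j)$ with $i<j$, since the transitive closure of a directed path is the strict linear order induced by the path. The arcs labeled ``$a$'' form a subset of the consecutive pairs $(q_i,q_{i+1})$; say $(q_i,q_{i+1})$ is an ``$a$''-arc precisely when $i\in A$ for some fixed $A\subseteq\{0,\ldots,n-2\}$. The key point is that both arc-relations are governed by the single linear order $q_0<q_1<\cdots<q_{n-1}$.

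The heart of the argument is to verify that the median operation $\mathrm{med}$ with respect to this order is a polymorphism of $D(Q)$. Since $\mathrm{med}$ returns, among three elements of a chain, the middle one, it satisfies $\mathrm{med}(x,x,y)=\mathrm{med}(x,y,x)=\mathrm{med}(y,x,x)=x$, so it is a majority function; it remains to check that it preserves each arc label. For ``$t$''-arcs I would use that the median preserves the strict order: if $x_k<y_k$ for $k\in\{1,2,3\}$, then for the two indices $k$ at which $y_k$ is among the two smallest of the three $y$-values we have $x_k<y_k\leq\mathrm{med}(y_1,y_2,y_3)$, so at least two of the $x_k$ lie strictly below $\mathrm{med}(y_1,y_2,y_3)$, whence the second-smallest $x$-value satisfies $\mathrm{med}(x_1,x_2,x_3)<\mathrm{med}(y_1,y_2,y_3)$. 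For ``$a$''-arcs, if each $(x_k,y_k)$ is the consecutive pair $(q_{i_k},q_{i_k+1})$ with $i_k\in A$, then $\mathrm{med}(x_1,x_2,x_3)=q_m$ and $\mathrm{med}(y_1,y_2,y_3)=q_{m+1}$, where $m=\mathrm{med}(i_1,i_2,i_3)$; since the median index is one of $i_1,i_2,i_3$, we have $m\in A$, so $(q_m,q_{m+1})$ is again an ``$a$''-arc. This is the step I expect to require the most care, as the same operation must preserve both arc-types simultaneously; the computation above shows that it does, precisely because ``$a$''-arcs and ``$t$''-arcs are each compatible with the path order.

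Having produced a ternary majority polymorphism of $D(Q)$, Theorem~\ref{thm:majority} yields that $\HOM(D(Q))$ is polynomial-time solvable, and the reduction of Proposition~\ref{prop:a-a+-reduction-to-HOM} then gives a polynomial-time algorithm for \textsc{\rgphom($Q$)}, as desired.
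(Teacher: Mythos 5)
Your proposal is correct and follows essentially the same route as the paper's proof: reduce to \HOM($D(Q)$) via Proposition~\ref{prop:a-a+-reduction-to-HOM}, exhibit the median with respect to the path order $q_0<\cdots<q_{n-1}$ as a ternary majority polymorphism of $D(Q)$, and conclude by Theorem~\ref{thm:majority}. Your verification for the two arc labels is a slightly more explicit rendering of the paper's case analysis, but identical in substance.
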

\begin{proof} By Proposition~\ref{prop:a-a+-reduction-to-HOM}, it suffices to show that \HOM($D(Q)$), as defined before Proposition~\ref{prop:a-a+-reduction-to-HOM}, is polynomial-time solvable. To do this, we show that $D(Q)$ admits a majority polymorphism $f:V(D(Q))^3\to V(D(Q))$, and we will obtain the claimed result by Theorem~\ref{thm:majority}.

As previously, let $q_0,\ldots,q_{n-1}$ be the vertices of $Q$, with the arc $(q_i,q_{i+1})$ whenever $0\leq i\leq n-2$. This induces a natural ordering of the vertices with $q_i\leq q_j$ if and only if $i\leq j$. The majority polymorphism $s$ we consider is the median: for three (not necessarily distinct) integers $i,j,k$ with $0\leq i\leq j\leq k\leq n-1$ and $\{a,b,c\}=\{i,j,k\}$, we let $f(q_a,q_b,q_c)=q_j$.

First, it is clear that $f$ is a majority function by our definition since the median of $q_i$, $q_i$ and $q_j$ is always $q_i$.

It remains to show that $f$ is a polymorphism, that is, a homomorphism of $D(Q)^3$ to $D(Q)$. Let $(q_a,q_b,q_c)$ and $(q_d,q_e,q_f)$ be two vertices of $D(Q)^3$ such that $((q_a,q_b,q_c),(q_d,q_e,q_f))$ is an arc in $D(Q)^3$ (that is, $(q_a,q_d)$, $(q_b,q_e)$ and $(q_c,q_f)$ are arcs in $D(Q)$ with the same label). Assume without loss of generality that $a\leq b\leq c$. Since $D(Q)$ has two arc labels, there are two cases to consider.

If the arc is labeled ``$a$'', we necessarily have $d=a+1$, $e=b+1$ and $f=c+1$; since  $a\leq b\leq c$, also $d\leq e\leq f$. Thus, we have $f(q_a,q_b,q_c)=q_b$ and $f(q_d,q_e,q_f)=q_e=q_{b+1}$. Thus there is an arc labeled ``$a$'' from $f(q_a,q_b,q_c)$ to $f(q_d,q_e,q_f)$, as required.

If the arc is labeled ``$t$'', we have $d\geq a+1$, $e\geq b+1$ and $f\geq c+1$. Since $a\leq b\leq c$, we will have $f(q_d,q_e,q_f)=q_m$ with $m\geq b+1$. Since $f(q_a,q_b,q_c)=q_b$, there is an arc with label ``$t$'' from $f(q_a,q_b,q_c)$ to $f(q_d,q_e,q_f)$, as required. This completes the proof.
\end{proof}

We remark that Theorem~\ref{thm:a-a+-paths} also applies to RGPs with vertex-labels (where a vertex with a given label can only be assigned to a vertex with the same label). Indeed, a vertex-label is modeled as a unary relation, and unary relations trivially satisfy the properties for having a majority polymorphism.

Moreover, using the same method, Theorem~\ref{thm:a-a+-paths} extends to labels of the form ``$a^\ast$'', ``$a^k$'' or ``$a^{\leq k}$'' where $k\in\mathbb{N}$.

\section{Conclusion}\label{sec:conclu}

We have seen that \textsc{\rgphom}, which is generally PSPACE-hard (but in NP when the target RGP is a graph database), is in EXPTIME. This favorably compares to the general complexity of RGP query containment, which is EXPSPACE-complete~\cite{CDLV00,FLS98}, and motivates the use of RGP n-homomorphisms to approximate query containment. It remains to close the gap between the PSPACE lower bound and the EXPTIME upper bound. This also holds for \textsc{N-Core}.

We have also seen that when all labels are ``$a$'' or ``$a^+$'' (a case that is also in NP, and that corresponds to XPath and SPARQL queries), we have a complete classification of the NP-complete and polynomial cases for undirected RGPs, and all RGPs whose underlying digraph is a directed path are polynomial. It was proved in~\cite{MS04} that when both the input and target is a directed tree with arc-labels in $\{a,a^+\}$, \rgphom{} is polynomial-time solvable. Is it true that (for general instances) \rgphom($Q$) is polynomial when $Q$ is a directed tree with arc-labels in $\{a,a^+\}$? \footnote{Note that this would not be true for all acyclic RGPs, indeed there exists an oriented tree $T$ with 27 vertices such that \HOM($T$) is NP-complete~\cite{BBSW23}. Thus, for the RGP $Q(T)$ with $T$ as its underlying digraph and all arc-labels equal to ``$a$'', \textsc{\rgphom($Q(T)$)} is NP-complete.} When all arc-labels are ``$a$'', then the only n-core RGPs whose underlying digraphs are directed trees are, in fact, the directed paths (such a directed tree maps to its longest directed path). But there are many more n-cores when both labels ``$a$'' and ``$a^+$'' are used, see Figure~\ref{fig:directed-tree-ncore} for a simple example.

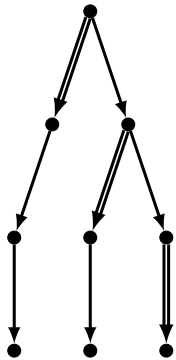
\begin{figure}[!htpb]
  \centering
  \scalebox{1}{\begin{tikzpicture}[join=bevel,inner sep=0.6mm]

      \node[draw,shape=circle,fill] (x) at (0,3) {};
      \node[draw,shape=circle,fill] (y) at (-0.5,1.5) {};
      \node[draw,shape=circle,fill] (z) at (0.5,1.5) {};
      \node[draw,shape=circle,fill] (u) at (-1,0) {};
      \node[draw,shape=circle,fill] (v) at (1,0) {};
      \node[draw,shape=circle,fill] (w) at (0,0) {};
      \node[draw,shape=circle,fill] (s) at (-1,-1.5) {};
      \node[draw,shape=circle,fill] (t) at (1,-1.5) {};
      \node[draw,shape=circle,fill] (p) at (0,-1.5) {};
      
      \draw[->,>=latex, double, line width=0.4mm] (x)--(y);
      \draw[->,>=latex, line width=0.4mm] (u)--(s);
      \draw[->,>=latex, line width=0.4mm] (z)--(v);
      \draw[->,>=latex, double, line width=0.4mm] (z)--(w);
      \draw[->,>=latex, line width=0.4mm] (w)--(p);
      \draw[->,>=latex, double, line width=0.4mm] (v)--(t);
      \draw[->,>=latex, line width=0.4mm] (y)--(u);
      \draw[->,>=latex, line width=0.4mm] (x)--(z);

    \end{tikzpicture}}
      \caption{An n-core directed tree RGP with arc-labels in $\{a,a^+\}$. Doubled edges are labeled ``$a^+$'', the others are labeled ``$a$''.}
  \label{fig:directed-tree-ncore}
\end{figure}

  It would be interesting to study further cases corresponding to relevant applications, such as the ones of ``simple regular expressions'' studied in~\cite{FGKMNT20}. An interesting case with a unary alphabet is the case where all labels are in $\{a^k | k\in\mathbb{N}\}$. This models a kind of weighted homomorphism, where an arc with weight $k$ (that is, with label ``$a^k$'') can only map to a directed walk with sum of weights equal to $k$. If we consider weights of the form $\{a^{\leq k} | k\in\mathbb{N}\}$, this would be close to the setting of homomorphisms to \emph{powers of digraphs}, also related to the notion of \emph{dilation} of graph embeddings~\cite{powers}.

\end{document}